\newcommand{\jmp}{J. Math. Phys.~}
\newcommand{\prl}{Phys. Rev. Lett.~}
\newcommand{\pra}{Phys. Rev. A~}
\newcommand{\pla}{Phys. Lett. A~}
\definecolor{myurlcolor}{rgb}{0,0,0.7}
\newcommand{\blue}{\textcolor{blue}}
\newcommand{\proj}[1]{| #1\rangle\!\langle #1 |}
\newcommand{\ketbra}[2]{|#1\rangle\!\langle#2|}
\newcommand{\tinyspace}{\mspace{1mu}}
\newcommand{\op}[1]{\operatorname{#1}}
\newcommand{\abs}[1]{\left\lvert\tinyspace #1 \tinyspace\right\rvert}
\renewcommand{\det}{\operatorname{det}}
\renewcommand{\t}{{\scriptscriptstyle\mathsf{T}}}
\newcommand{\setft}[1]{\mathrm{#1}}
\newcommand{\lin}[1]{\setft{L}\left(#1\right)}
\newcommand{\density}[1]{\setft{D}\left(#1\right)}
\renewcommand{\vec}{\op{vec}}
\def \diag {\mathrm{diag}}
\def\I{\mathbb{1}}
\def\zero{\mathbf{0}}
\newenvironment{mylist}[1]{\begin{list}{}{
    \setlength{\leftmargin}{#1}
    \setlength{\rightmargin}{0mm}
    \setlength{\labelsep}{2mm}
    \setlength{\labelwidth}{8mm}
    \setlength{\itemsep}{0mm}}}
    {\end{list}}
\def\ot{\otimes}
\newcommand{\inner}[2]{\langle #1 , #2\rangle}
\newcommand{\out}[2]{| #1\rangle\langle #2 |}
\newcommand{\Inner}[2]{\left\langle #1 , #2\right\rangle}
\newcommand{\pa}[1]{(#1)}
\newcommand{\Pa}[1]{\left(#1\right)}
\newcommand{\set}[1]{\{#1\}}
\newcommand{\Set}[1]{\left\{#1\right\}}
\newcommand{\ket}[1]{|#1\rangle}
\newcommand{\braket}[1]{\langle#1\rangle}
\DeclareMathOperator{\trace}{Tr}
\newcommand{\ptr}[2]{\trace_{#1}\pa{#2}}
\newcommand{\Ptr}[2]{\trace_{#1}\Pa{#2}}
\newcommand{\tr}[1]{\ptr{}{#1}}
\newcommand{\Tr}[1]{\Ptr{}{#1}}
\newcommand{\Abs}[1]{\left|\tinyspace#1\tinyspace\right|}
\def\cB{\mathcal{B}}\def\cC{\mathcal{C}}\def\cE{\mathcal{E}}
\def\cI{\mathcal{I}}
\def\cT{\mathcal{T}}
\def\bG{\mathbf{G}}
\def\bsA{\boldsymbol{A}}\def\bsB{\boldsymbol{B}}\def\bsC{\boldsymbol{C}}\def\bsE{\boldsymbol{E}}
\def\bsF{\boldsymbol{F}}\def\bsG{\boldsymbol{G}}
\def\bsM{\boldsymbol{M}}
\def\bsP{\boldsymbol{P}}\def\bsT{\boldsymbol{T}}
\def\bsU{\boldsymbol{U}}\def\bsX{\boldsymbol{X}}\def\bsY{\boldsymbol{Y}}
\def\bsr{\boldsymbol{r}}
\def\T{\textsf{T}}
\def\bbC{\mathbb{C}}
\def\bbR{\mathbb{R}}
\def\sfU{\mathsf{U}}
\newtheorem{thrm}{Theorem}[section]
\newtheorem{prop}[thrm]{Proposition}
\newtheorem{cor}[thrm]{Corollary}
\theoremstyle{definition}
\newtheorem{definition}[thrm]{Definition}
\newtheorem{remark}[thrm]{Remark}
\newtheorem{exam}[thrm]{Example}
\numberwithin{equation}{section}
\newcounter{questionnumber}
\begin{document}

\title{Circulant quantum channels and its applications}

\author{\blue{Bing Xie}\footnote{E-mail: xiebingjiangxi2023@163.com}\quad\text{and}\quad \blue{Lin Zhang}\footnote{E-mail: godyalin@163.com}\\
  {\it\small School of Science, Hangzhou Dianzi University, Hangzhou 310018, PR~China}}
\date{}
\maketitle

\begin{abstract}
This note introduces a family of circulant quantum channels --- a
subclass of the mixed-permutation channels --- and investigates its
key structural and operational properties. We show that the image of
the circulant quantum channel is precisely the set of circulant
matrices. This characterization facilitates the analysis of
arbitrary $n$-th order Bargmann invariants. Furthermore, we prove
that the channel is entanglement-breaking, implying a substantially
reduced resource cost for erasing quantum correlations compared to a
general mixed-permutation channel. Applications of this channel are
also discussed, including the derivation of tighter lower bounds for
$\ell_p$-norm coherence and a characterization of its action in
bipartite systems.
\end{abstract}

\newpage

\section{Introduction}

A quantum channel is a mathematical description of quantum
information transmission or processing, represented by a completely
positive trace-preserving map (CPTP), which can characterize the
state evolution between quantum systems. Quantum channels play a
significant role in many physical processes, including quantum noise
\cite{qcqi}, quantum computing \cite{Takagi2020}, quantum
communication \cite{Divincenzo1995}, etc. Quantum systems are
affected by noise in actual environments, while quantum channels can
simulate the impact of environmental disturbances such as
decoherence \cite{Maziero2009} and amplitude damping
\cite{Khatri2020} on quantum systems. Quantum computing essentially
accomplishes information processing through the series and parallel
combination of quantum gates, where each quantum gate
\cite{Piveteau2025} corresponds to a definite channel. In quantum
communication, quantum channels are used to transmit or protect
quantum information. For instance, in quantum teleportation
\cite{Bennett1993}, quantum channels are utilized to transmit
unknown quantum states, and in quantum key distribution
\cite{Boyer2016}, encoded quantum states are transmitted through
quantum channels.

In \cite{Zhang2024}, the authors investigate a special type of
channel, the so-called \emph{mixed-permutation channel}, defined
with help of all permutations in $S_d$, the set of all permutations
of $d$ distinct elements. For any coherence measure, the coherence
measure of quantum states does not increase under the
mixed-permutation channel. Therefore, the mixed-permutation channel
can be used to estimate quantum coherence. In the above definition
of the mixed-permutation channel, they used \emph{all} permutations
in $S_d$. Instead, in this paper, we will use the
\emph{appropriate} subset of $S_d$ to define a special class of mixed
permutation channels. When all the coefficients are uniformly
distributed, this is a channel that converts any matrix into a
circulant matrix, which we call a circulant quantum channel. For
$\ell_p$-norm $(p=1,2)$ coherence \cite{Baumgratz}, the circulant
quantum channel can also be used to estimate coherence, and the
estimation effect is better than that of the mixed-permutation
channel, studied in \cite{Zhang2024}.

Circulant matrices play a vital role in characterizing Bargmann
invariants. As established in \cite{Fernandes2024}, these invariants
serve as witnesses for quantum imaginarity. Numerical evidence
strongly suggests that for four pure states, the boundary of the
Bargmann invariant set coincides precisely with that of circulant
Gram matrices. This equivalence was rigorously proven for
third-order invariants in \cite{Li2025}, which provided a complete
characterization for circulant Gram matrices of $n$ pure states.
Subsequently the researcher employed alternative methods to
characterize boundaries for third- and fourth-order invariants
\cite{Zhang2025}. These results support a natural conjecture: for
arbitrary $n$, the set of $n$-th order Bargmann invariants equals
the set of circulant Gram matrices. Notably, circulant quantum
channels provide a natural framework for resolving this
characterization problem, as they enable reconstruction of proof
methodologies like Xu and Pratapsi et al.'s
\cite{Xu2026PLA,Pratapsi2025}.

This work aims to investigate the fundamental properties of the
circulant quantum channel as well as its specific manifestation. In
addition, we employ such channel to reformulate the results for
characterization of Bargmann invariant sets and utilize it for
estimating quantum coherence. We demonstrate that a circulant
quantum channel can yield a tighter lower bound on quantum coherence
as measured by the $\ell_p$-norm coherence measure. Finally, we
extended the role of the circulant quantum channel to bipartite
systems. We find that when the circulant quantum channel acts
locally on a bipartite system, the entanglement of the bipartite
state is completely destroyed.

This paper is organized as follows. Section~\ref{sect:2prelimi}
presents the necessary background. In Section~\ref{sect:circq}, we
introduce circulant quantum channels and investigate their
properties. Applications of these channels are discussed in
Section~\ref{sect:appl}. Section~\ref{sect:bipartexp} extends the
notion of circulant channels to bipartite systems. Finally, we
present our conclusions in Section~\ref{sect:concl}.

\section{Preliminaries}\label{sect:2prelimi}

We consider quantum systems in the $d$-dimensional Euclidean space
$\bbC^d$ and fix the reference basis to $\set{\ket{i}}_{k=1}^d$. The
quantum state is described by a density matrix, i.e., a positive
semidefinite matrix of unit trace. We then denote the set of all
$d\times d$ complex matrices as $\lin{\bbC^d}$, and the set of all
density matrices as $\density{\bbC^d}$.

For a tuple of $n$ quantum states
$\Psi=(\rho_1,\rho_2,\cdots,\rho_n)$, its Bargmann invariant is
defined as $\Tr{\rho_1\rho_2\cdots\rho_n}$, which is invariant under
unitary transformation. To characterize the set of $n$-order
Bargmann invariants, circulant matrices are introduced
\cite{Fernandes2024}. In quantum systems of arbitrary finite
dimension, any permissible value of Bargmann invariants can be
achieved using pure states, which has a circular Gram matrix
\cite{Xu2026PLA}.

A $d\times d$ circulant matrix $\bsC$ is a type of matrix with a special structure, in which each row is a circulant shift of the previous row \cite{kra2012}, in the form of
\begin{eqnarray}
    \bsC=\begin{pmatrix}
        c_0 & c_{1} & c_{2} & \cdots & c_{d-1} \\
        c_{d-1} & c_0 & c_{1} & \cdots & c_{d-2} \\
        c_{d-2} & c_{d-1} & c_0 & \cdots & c_{d-3} \\
        \vdots & \vdots & \vdots & \ddots & \vdots \\
        c_{1} & c_{2} & c_{3} & \cdots & c_0
    \end{pmatrix}.
\end{eqnarray}
This means that at most only $d$ numbers are needed to determine a circulant matrix.

For $\pi\in S_d$, a permutation on the set $\set{1,\ldots,d}$, we
can define a permutation matrix $\bsP_\pi$
\begin{eqnarray}
    \bsP_\pi = \sum_{i=1}^d \out{\pi(i)}{i}.
\end{eqnarray}
By conventions, we denote $\bsP^0_{\pi}\equiv\I_d$, the identity
matrix on $\bbC^d$. It is easily seen \cite{Zhang2024} that
$\bsP_\pi \bsP_\sigma = \bsP_{\pi\sigma}$ and $\bsP_\pi^\dagger =
\bsP_\pi^\t = \bsP_{\pi^{-1}}=\bsP_\pi^{-1}$ for any $\pi$ and
$\sigma$ in $S_d$. Then the circulant matrix can be expressed as
$\bsC=\sum_{k=0}^{d-1} c_k \bsP^k_{\pi_0}$, where
$\pi_0=(d,\cdots,2,1)\in S_d$ and $\bsP^d_{\pi_0}=\I_d$.

In the following, we use the notation
\begin{eqnarray*}
    s\oplus t:=\begin{cases}
        s+t,&s+t\leqslant d\\
        s+t-d,&d< s+t
    \end{cases},~
    s\ominus t:=\begin{cases}
        s-t,&0\leqslant s-t\\
        s-t+d,& s+t<0
    \end{cases}.
\end{eqnarray*}
Under this notation, we have $\bsP_{\pi_0}^{k}=\sum_{i=1}^{d}\ketbra{i}{i\oplus k}$ for $k=0,\cdots,d$, due to
\begin{eqnarray*}
    \bsP_{\pi_0}^{k}=\sum_{i_1,\cdots,i_k=1}^{d}\braket{i_1|\pi_0(i_2)}\cdots\braket{i_{k-1}|\pi_0(i_k)}\ketbra{\pi_0(i_1)}{i_k}=\sum_{i_k=1}^{d}\ketbra{\pi^k_0(i_k)}{i_k}.
\end{eqnarray*}

\section{Circulant quantum channels}\label{sect:circq}

A quantum channel is a completely positive definite and
trace-preserving linear mapping that maps a density matrix to
another density matrix. All legal quantum channels have a Kraus
representation: $\Xi(\cdot) = \sum_\mu K_\mu (\cdot)
K_\mu^\dagger$, where $\sum_\mu K_\mu ^\dagger K_\mu = \I_d $.

Next, what we will do is the identification and systematic analysis
of circulant channels as a natural and operationally significant
subclass within quantum resource theories. While the broader class
of mixed-permutation channels has been introduced for tasks like
coherence estimation \cite{Zhang2024}, their restriction to the
cyclic subgroup---yielding the circulant structure---has not been
explicitly developed as an independent resource-theoretic object.
This gap is noteworthy because circulant channels possess unique
symmetries and algebraic properties that make them a potent tool for
investigating the geometric structure of quantum state space.

We now consider the following special mixed permutation channel and
explore its properties and applications.

\begin{definition}
Fix $\pi_0=(d,\cdots,2,1)\in S_d$. For any $d$-dimensional
probability vector
$\boldsymbol{\lambda}=(\lambda_0,\ldots,\lambda_{d-1})$, the
\emph{circulant quantum channel with weight $\boldsymbol{\lambda}$},
denoted by $\Phi_{\boldsymbol{\lambda}}$, is defined as
\begin{eqnarray}
\Phi_{\boldsymbol{\lambda}}(\bsX) := \sum_{k=0}^{d-1}
\lambda_k\bsP^k_{\pi_0}\bsX\bsP^{-k}_{\pi_0}.
\end{eqnarray}
In particular, $\Phi_{\boldsymbol{\lambda}}$ is abbreviated as
$\Phi$ when $\boldsymbol{\lambda}=(\frac1d,\ldots,\frac1d)$. That
is,
\begin{eqnarray}\label{phi}
\Phi(\bsX) :=
\frac1d\sum_{k=0}^{d-1}\bsP^k_{\pi_0}\bsX\bsP^{-k}_{\pi_0}.
\end{eqnarray}
This is called the \emph{circulant quantum channel} of uniform
weight.
\end{definition}

By definition, we have the following properties:
\begin{prop}
It holds that
\begin{enumerate}[(i)]
\item $\Phi_{\boldsymbol{\lambda}}$ is unital, that is $\Phi_{\boldsymbol{\lambda}}(\I_d) = \I_d$.
\item $\Phi_{\boldsymbol{\lambda}}$ is not self-adjoint with respect to Hilbert-Schmidt inner product.
\item $[\Phi_{\boldsymbol{\lambda}}(\bsX )]^\t = \Phi_{\boldsymbol{\lambda}}(\bsX^\t)$.
\item $\Phi_{\boldsymbol{\lambda}}$ is Hermitian-preserving, that is, $\Phi_{\boldsymbol{\lambda}}(\bsX)=\Phi_{\boldsymbol{\lambda}}(\bsX)^\dagger$ for $\bsX=\bsX^\dagger$.
\item $\Phi \circ \Phi = \Phi$, that is $\Phi (\Phi(\bsX)) = \Phi(\bsX)$ for any matrix $\bsX$ acting on
$\bbC^d$.
\end{enumerate}
\end{prop}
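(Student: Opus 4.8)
The plan is to verify all five items by direct computation, using only the structural facts recorded in Section~\ref{sect:2prelimi}: the cyclic group law $\bsP^j_{\pi_0}\bsP^k_{\pi_0}=\bsP^{j\oplus k}_{\pi_0}$ with $\bsP^d_{\pi_0}=\I_d$, the identity $(\bsP^k_{\pi_0})^{\dagger}=(\bsP^k_{\pi_0})^{\t}=\bsP^{-k}_{\pi_0}$, and the normalization $\sum_{k=0}^{d-1}\lambda_k=1$. Item (i) is immediate: $\Phi_{\boldsymbol{\lambda}}(\I_d)=\sum_{k=0}^{d-1}\lambda_k\bsP^k_{\pi_0}\bsP^{-k}_{\pi_0}=\big(\sum_k\lambda_k\big)\I_d=\I_d$. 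For (iii) and (iv) I would take the transpose (respectively the conjugate transpose) inside the defining sum and push it through each summand: since $(\bsP^k_{\pi_0}\bsX\bsP^{-k}_{\pi_0})^{\t}=(\bsP^{-k}_{\pi_0})^{\t}\bsX^{\t}(\bsP^{k}_{\pi_0})^{\t}=\bsP^{k}_{\pi_0}\bsX^{\t}\bsP^{-k}_{\pi_0}$, the summation index is untouched and $[\Phi_{\boldsymbol{\lambda}}(\bsX)]^{\t}=\Phi_{\boldsymbol{\lambda}}(\bsX^{\t})$; the identical manipulation with $\dagger$ in place of $\t$ gives $[\Phi_{\boldsymbol{\lambda}}(\bsX)]^{\dagger}=\Phi_{\boldsymbol{\lambda}}(\bsX^{\dagger})$, which specializes to Hermitian-preservation when $\bsX=\bsX^{\dagger}$.

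For (v) I would expand the composition,
\[
\Phi(\Phi(\bsX))=\frac1{d^{2}}\sum_{j,k=0}^{d-1}\bsP^{j}_{\pi_0}\bsP^{k}_{\pi_0}\bsX\bsP^{-k}_{\pi_0}\bsP^{-j}_{\pi_0}=\frac1{d^{2}}\sum_{j,k}\bsP^{j\oplus k}_{\pi_0}\bsX\bsP^{-(j\oplus k)}_{\pi_0},
\]
and then group the terms by the value $m=j\oplus k$: for each of the $d$ choices of $m$ there are exactly $d$ pairs $(j,k)$ with $j\oplus k=m$, so the double sum collapses to $\frac1d\sum_{m=0}^{d-1}\bsP^{m}_{\pi_0}\bsX\bsP^{-m}_{\pi_0}=\Phi(\bsX)$. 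Conceptually this is nothing but the statement that averaging over the action of the cyclic group $\{\bsP^k_{\pi_0}\}_{k=0}^{d-1}$ is the orthogonal projection onto the corresponding fixed-point algebra, hence idempotent.

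Item (ii) is the only one requiring a moment of care. I would first compute the Hilbert--Schmidt adjoint explicitly: from $\inner{\bsY}{\Phi_{\boldsymbol{\lambda}}(\bsX)}=\sum_k\lambda_k\Tr{\bsY^{\dagger}\bsP^k_{\pi_0}\bsX\bsP^{-k}_{\pi_0}}$, cyclicity of the trace and $(\bsP^k_{\pi_0})^{\dagger}=\bsP^{-k}_{\pi_0}$ give $\Phi^{*}_{\boldsymbol{\lambda}}(\bsY)=\sum_k\lambda_k\bsP^{-k}_{\pi_0}\bsY\bsP^{k}_{\pi_0}=\Phi_{\tilde{\boldsymbol{\lambda}}}(\bsY)$, where $\tilde{\boldsymbol{\lambda}}$ is the reversed weight vector $\tilde\lambda_k:=\lambda_{d\ominus k}$. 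Hence $\Phi_{\boldsymbol{\lambda}}$ is self-adjoint precisely when $\boldsymbol{\lambda}$ is palindromic, and for a generic weight vector it is not; a concrete witness is $d=3$, $\boldsymbol{\lambda}=(0,1,0)$, for which $\Phi_{\boldsymbol{\lambda}}(\cdot)=\bsP_{\pi_0}(\cdot)\bsP^{-1}_{\pi_0}$ while $\Phi^{*}_{\boldsymbol{\lambda}}(\cdot)=\bsP^{-1}_{\pi_0}(\cdot)\bsP_{\pi_0}$, and these two maps differ since $\bsP^2_{\pi_0}\neq\I_3$. I therefore expect the only genuine subtlety to lie in the correct reading of (ii): for the uniform weight the reindexing $k\mapsto d\ominus k$ fixes $\boldsymbol{\lambda}$, so $\Phi$ itself \emph{is} self-adjoint, and the statement should be understood as asserting non-self-adjointness for general $\boldsymbol{\lambda}$, with the proof displaying the explicit dependence on the symmetry of $\boldsymbol{\lambda}$.
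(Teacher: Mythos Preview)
Your proof is correct and follows essentially the same route as the paper: direct verification of (i), (iii), (iv) from $(\bsP^k_{\pi_0})^{\t}=(\bsP^k_{\pi_0})^{\dagger}=\bsP^{-k}_{\pi_0}$, explicit computation of the adjoint for (ii), and for (v) expanding the double sum and counting that each residue $m=j\oplus k$ occurs exactly $d$ times. Your treatment of (ii) is in fact more careful than the paper's---the paper simply writes $\Phi_{\boldsymbol{\lambda}}^{\dagger}\neq\Phi_{\boldsymbol{\lambda}}$ without qualification, whereas you correctly identify that equality holds precisely for palindromic $\boldsymbol{\lambda}$ (in particular for the uniform weight), so the statement must be read as ``not self-adjoint in general.''
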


\begin{proof}
(1) Since $\Phi_{\boldsymbol{\lambda}}(\I_d)=\sum_{k=0}^{d-1}\lambda_k \bsP^k_{\pi_0} \bsP_{\pi_0}^{-k}=\I_d$, $\Phi_{\boldsymbol{\lambda}}$ is unital.\\
(2) Since
\begin{eqnarray*}
\inner{\bsX }{\Phi_{\boldsymbol{\lambda}}(\bsY)}&=&\inner{\bsX
}{\sum_{k=0}^{d-1}\lambda_k \bsP^k_{\pi_0}\bsY
\bsP_{\pi_0}^{-k}}=\inner{\sum_{k=0}^d\lambda_k
\bsP^{-k}_{\pi_0}\bsX
\bsP_{\pi_0}^{k}}{\bsY}=\inner{\Phi_{\boldsymbol{\lambda}}^\dagger(\bsX
)}{\bsY},
\end{eqnarray*}
we have $\Phi_{\boldsymbol{\lambda}}^\dagger(\bsX )=\sum_{k=0}^{d-1}\lambda_k
\bsP^{-k}_{\pi_0}\bsX  \bsP_{\pi_0}^{k}\neq \Phi_{\boldsymbol{\lambda}}(\bsX )$. \\
(3)$[\Phi_{\boldsymbol{\lambda}}(\bsX)]^\t = \Pa{
\sum_{k=0}^{d-1}\lambda_k \bsP^k_{\pi_0} \,\bsX  \,
\bsP^{-k}_{\pi_0}}^\t =  \sum_{k=0}^{d-1} \lambda_k \bsP^{k}_{\pi_0}
\,
\bsX ^\t \, \bsP^{-k}_{\pi_0} = \Phi_{\boldsymbol{\lambda}}(\bsX ^\t)$. \\
(4)
$[\Phi_{\boldsymbol{\lambda}}(\bsX )]^\dagger = \left(  \sum_{k=0}^{d-1}\lambda_k
\bsP^k_{\pi_0} \,\bsX  \, \bsP^{-k}_{\pi_0} \right)^\dagger =
\sum_{k=0}^{d-1} \lambda_k\bsP^{k}_{\pi_0} \, \bsX ^\dagger \,
\bsP^{-k}_{\pi_0} = \Phi_{\boldsymbol{\lambda}}(\bsX )$ for $\forall \bsX \in
Herm(\mathbb{C}^d)$.\\
(5) Indeed,
\begin{eqnarray*}
    \Phi \circ
    \Phi(\bsX)&=&\frac{1}{d^2}\sum_{p,q=0}^{d-1}\bsP_{\pi_0}^{p+q}\bsX
    \bsP_{\pi_0}^{-(p+q)}
\end{eqnarray*}
Let $p+q=k$ and denote by $N(k)$ the number of occurrences of $k$,
\begin{eqnarray}\label{N}
    N(k)=\begin{cases}
        k+1,&0\leqslant k\leqslant d-1\\
        2d-1-k,&d\leqslant k\leqslant 2(d-1)
    \end{cases}.
\end{eqnarray}
Due to $\bsP_{\pi_0}^d=\bsP_{\pi_0}^0$, we have $\Phi \circ
\Phi(\bsX)=\frac{1}{d^2}\sum_{p,q=1}^{d}\bsP_{\pi_0}^{p\oplus q}\bsX
\bsP_{\pi_0}^{-(p\oplus q)}$.
Thus, we obtain $p\oplus q=k'\in\{1,\cdots,d\}$, and $N(k')=d$ for
$\forall k'$. Moreover, it holds that
\begin{eqnarray*}
\Phi \circ
\Phi(\bsX)&=&\frac{1}{d^2}\sum_{p,q=1}^{d}\bsP_{\pi_0}^{p\oplus
q}\bsX \bsP_{\pi_0}^{-(p\oplus
q)}=\frac{1}{d}\sum_{k'=1}^{d}\bsP_{\pi_0}^{k'}\bsX
\bsP_{\pi_0}^{-k'} =\Phi(\bsX).
\end{eqnarray*}
We are done.
\end{proof}

Here, we present the expression of the image of channel $\Phi_\lambda$.
\begin{thrm}
For any matrix $\bsX\in\lin{\bbC^d}$, the corresponding outputting
image $\Phi_{\boldsymbol{\lambda}}(\bsX)$ is
\begin{eqnarray*}
\Phi_{\boldsymbol{\lambda}}(\bsX)=\sum_{i,j=1}^{d}\Tr{\bsP_{\pi_0}^{-(j-1)}\Lambda\bsP_{\pi_0}^{i-1}\bsX}\ketbra{i}{j},
\end{eqnarray*}
where $\Lambda:=\diag(\lambda_0,\cdots,\lambda_{d-1})$ is a diagonal
matrix formed by
$\boldsymbol{\lambda}=(\lambda_0,\lambda_1,\ldots,\lambda_{d-1})$.
\end{thrm}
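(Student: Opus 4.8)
The plan is to prove the identity by comparing matrix elements in the reference basis $\set{\ket{i}}_{i=1}^{d}$: two operators on $\bbC^d$ agree if and only if all the numbers $\bra{i}\,\cdot\,\ket{j}$ agree, so it suffices to verify that
$\bra{i}\Phi_{\boldsymbol{\lambda}}(\bsX)\ket{j}=\Tr{\bsP_{\pi_0}^{-(j-1)}\Lambda\bsP_{\pi_0}^{i-1}\bsX}$
for every $i,j\in\set{1,\ldots,d}$. The only inputs needed are the two shift identities $\bra{i}\bsP_{\pi_0}^{k}=\bra{i\oplus k}$ and $\bsP_{\pi_0}^{-k}\ket{j}=\ket{j\oplus k}$, both immediate from the formula $\bsP_{\pi_0}^{k}=\sum_{m=1}^{d}\ketbra{m}{m\oplus k}$ recorded in Section~\ref{sect:2prelimi} together with $\bsP_{\pi_0}^{-k}=(\bsP_{\pi_0}^{k})^{\dagger}$.

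First I would handle the left-hand side: applying the shift identities termwise to $\Phi_{\boldsymbol{\lambda}}(\bsX)=\sum_{k=0}^{d-1}\lambda_k\bsP_{\pi_0}^{k}\bsX\bsP_{\pi_0}^{-k}$ gives $\bra{i}\bsP_{\pi_0}^{k}\bsX\bsP_{\pi_0}^{-k}\ket{j}=\bra{i\oplus k}\bsX\ket{j\oplus k}$, hence $\bra{i}\Phi_{\boldsymbol{\lambda}}(\bsX)\ket{j}=\sum_{k=0}^{d-1}\lambda_k\bra{i\oplus k}\bsX\ket{j\oplus k}$. For the right-hand side I would use cyclicity of the trace to write $\Tr{\bsP_{\pi_0}^{-(j-1)}\Lambda\bsP_{\pi_0}^{i-1}\bsX}=\Tr{\Lambda\bsP_{\pi_0}^{i-1}\bsX\bsP_{\pi_0}^{-(j-1)}}$, expand the diagonal matrix as $\Lambda=\sum_{k=0}^{d-1}\lambda_k\proj{k+1}$, and apply the same shift identities to reach $\sum_{k=0}^{d-1}\lambda_k\bra{(k+1)\oplus(i-1)}\bsX\ket{(k+1)\oplus(j-1)}$. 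The final step is the arithmetic identity $(k+1)\oplus(i-1)=i\oplus k$ (and likewise with $j$): both sides are the representative of $i+k$ modulo $d$ inside $\set{1,\ldots,d}$, which is checked by splitting into the cases $i+k\leqslant d$ and $i+k>d$. This turns the right-hand side into $\sum_{k=0}^{d-1}\lambda_k\bra{i\oplus k}\bsX\ket{j\oplus k}$, which coincides with the left-hand side, completing the proof. (Equivalently, one may phrase the whole computation as the operator identity $\Phi_{\boldsymbol{\lambda}}^{\dagger}(\ketbra{j}{i})=\bsP_{\pi_0}^{-(j-1)}\Lambda\bsP_{\pi_0}^{i-1}$, both sides equalling $\sum_{k}\lambda_k\ketbra{j\oplus k}{i\oplus k}$, and then use $\bra{i}\Phi_{\boldsymbol{\lambda}}(\bsX)\ket{j}=\Tr{\ketbra{j}{i}\Phi_{\boldsymbol{\lambda}}(\bsX)}$ together with the adjoint relation from Proposition~3.2(2).)

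There is no genuine obstacle here; the computation is routine once the shift identities are in place. The single point demanding care is index bookkeeping: the reference basis is labelled $1,\ldots,d$ whereas $\boldsymbol{\lambda}$ and the powers of $\bsP_{\pi_0}$ are indexed by $0,\ldots,d-1$, so one must keep the $\pm 1$ shifts straight and match $\lambda_k$ with the $(k+1)$-th diagonal entry of $\Lambda$. The reduction $(k+1)\oplus(i-1)=i\oplus k$ is the one place where the piecewise definition of $\oplus$ is actually invoked.
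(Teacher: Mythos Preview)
Your proposal is correct and is essentially the same computation as the paper's proof: both reduce to the shift identities for $\bsP_{\pi_0}^{k}$ and the index arithmetic $(k+1)\oplus(i-1)=i\oplus k$. The only presentational difference is that the paper works forward from the definition of $\Phi_{\boldsymbol{\lambda}}(\bsX)$ and reassembles the trace expression at the end, whereas you compute the $(i,j)$ matrix element of each side separately and match them; the underlying manipulations are identical.
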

\begin{proof} We first decompose matrix $\bsX$ into $\bsX=\sum_{m,n=1}^{d}x_{m,n}\out{m}{n}$, then
\begin{eqnarray*}
\Phi_{\boldsymbol{\lambda}}(\bsX) &=& \sum_{k=0}^{d-1}\lambda_k\bsP^k_{\pi_0}\bsX\bsP^{-k}_{\pi_0}=\sum_{k=0}^{d-1} \sum_{i,j,m,n=1}^{d} \lambda_k x_{m,n} \braket{i\oplus k|m}\braket{n|j\oplus k}\out{i}{j}\\
&=&\sum_{k=0}^{d-1} \sum_{i,j,m,n=1}^{d} \lambda_k x_{m,n} \braket{(i-1)\oplus (k+1)|m}\braket{n|(j-1)\oplus (k+1)}\out{i}{j}\\
&=& \sum_{i,j=1}^{d} \Tr{\sum_{k=0}^{d-1}\sum_{m,n=1}^{d}\lambda_k x_{m,n} \out{(k+1)\oplus(j-1)}{(k+1)\oplus(i-1)}\out{m}{n} }\out{i}{j}\\
&=& \sum_{i,j=1}^{d}\Tr{\sum_{p=1}^{d}\out{p\oplus(j-1)}{p}\sum_{k=0}^{d-1}\lambda_k\out{k+1}{k+1}\sum_{q=1}^{d}\out{q}{q\oplus(i-1)}\sum_{m,n=1}^{d}x_{m,n}\out{m}{n}}\out{i}{j}\\
&=&\sum_{i,j=1}^{d}\Tr{\bsP_{\pi_0}^{-(j-1)}\Lambda\bsP_{\pi_0}^{i-1}\bsX}\out{i}{j}.
\end{eqnarray*}
This completes the proof.
\end{proof}

When $\boldsymbol{\lambda}=(\frac1d,\cdots,\frac1d)$, This mixed
permutation channel simplifies to a circulant quantum channel --- a
more restricted class that maps any input matrix to a circulant
matrix.

\begin{cor}\label{t1}
For any inputting matrix $\bsX\in\lin{\bbC^d}$, the corresponding
outputting image $\Phi(\bsX)$ under the circulant quantum channel
$\Phi$ is a circulant matrix
\begin{eqnarray}
\Phi(\bsX)=\sum^{d-1}_{k=0}c_k(\bsX)\bsP_{\pi_0}^k,
\end{eqnarray}
where $c_k(\bsX):=\frac1d\Tr{\bsP_{\pi_0}^{-k}\bsX}$. Moreover,
$[\Phi(\bsX),\Phi(\bsY)]=\zero$ for any $\bsX$ and $\bsY$ acting on
$\bbC^d$.
\end{cor}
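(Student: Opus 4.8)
The plan is to deduce the corollary from the preceding theorem by specializing $\boldsymbol{\lambda}=(\tfrac1d,\ldots,\tfrac1d)$, which makes $\Lambda=\tfrac1d\I_d$, and then separately verify the commutation claim. First I would substitute $\Lambda=\tfrac1d\I_d$ into the formula $\Phi_{\boldsymbol{\lambda}}(\bsX)=\sum_{i,j}\Tr{\bsP_{\pi_0}^{-(j-1)}\Lambda\bsP_{\pi_0}^{i-1}\bsX}\ketbra{i}{j}$. Since $\I_d$ commutes with everything and $\bsP_{\pi_0}^{-(j-1)}\bsP_{\pi_0}^{i-1}=\bsP_{\pi_0}^{(i-1)-(j-1)}=\bsP_{\pi_0}^{i-j}$, the matrix entry in position $(i,j)$ becomes $\tfrac1d\Tr{\bsP_{\pi_0}^{i-j}\bsX}$, which depends only on $i-j \bmod d$. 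Setting $c_k(\bsX):=\tfrac1d\Tr{\bsP_{\pi_0}^{-k}\bsX}$ (equivalently $\tfrac1d\Tr{\bsP_{\pi_0}^{d-k}\bsX}$, using $\bsP_{\pi_0}^d=\I_d$), the $(i,j)$ entry is $c_{(j-i)\bmod d}(\bsX)$, so collecting terms gives $\Phi(\bsX)=\sum_{k=0}^{d-1}c_k(\bsX)\bsP_{\pi_0}^k$; I should check that the sign/offset convention matches the definition of $\bsP_{\pi_0}^k=\sum_i\ketbra{i}{i\oplus k}$ given in the preliminaries, i.e. that $\bsP_{\pi_0}^k$ has its nonzero entries one diagonal "up," consistent with the displayed circulant matrix $\bsC=\sum_k c_k\bsP_{\pi_0}^k$. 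This identification of the image with the circulant matrices is essentially immediate once the theorem is in hand.

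For the commutation statement, the key observation is that $\Phi(\bsX)$ and $\Phi(\bsY)$ are both polynomials in the single matrix $\bsP_{\pi_0}$. Concretely, $\Phi(\bsX)=\sum_k c_k(\bsX)\bsP_{\pi_0}^k$ and $\Phi(\bsY)=\sum_\ell c_\ell(\bsY)\bsP_{\pi_0}^\ell$, so
\begin{eqnarray*}
[\Phi(\bsX),\Phi(\bsY)]=\sum_{k,\ell}c_k(\bsX)c_\ell(\bsY)\big(\bsP_{\pi_0}^k\bsP_{\pi_0}^\ell-\bsP_{\pi_0}^\ell\bsP_{\pi_0}^k\big)=\zero,
\end{eqnarray*}
since $\bsP_{\pi_0}^k\bsP_{\pi_0}^\ell=\bsP_{\pi_0}^{k+\ell}=\bsP_{\pi_0}^\ell\bsP_{\pi_0}^k$ by the homomorphism property $\bsP_\pi\bsP_\sigma=\bsP_{\pi\sigma}$ recorded in Section~\ref{sect:2prelimi}. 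Equivalently, one may simply note that the set of circulant matrices is the commutative algebra $\complex[\bsP_{\pi_0}]$ (it is closed under products because $\bsP_{\pi_0}^d=\I_d$), so any two of its elements commute.

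I do not anticipate a genuine obstacle here; the corollary is a direct specialization plus a one-line algebraic remark. The only point requiring a little care is bookkeeping with the cyclic indices — making sure the "$-k$" in $c_k(\bsX)=\tfrac1d\Tr{\bsP_{\pi_0}^{-k}\bsX}$ lines up with the placement of $c_k$ in the circulant matrix from the preliminaries, and that reindexing the sum over $k\in\{0,\ldots,d-1\}$ modulo $d$ is legitimate (which it is, since $\bsP_{\pi_0}^k$ is $d$-periodic in $k$). If one prefers a self-contained argument avoiding the theorem, an alternative is to compute $\Tr{\bsP_{\pi_0}^{-m}\Phi(\bsX)}$ directly using the cyclicity of the trace and $\Tr{\bsP_{\pi_0}^{j}}=d\,\delta_{j\equiv 0}$, read off the coefficients $c_k(\bsX)$, and conclude; but routing through the theorem is cleaner.
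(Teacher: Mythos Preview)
Your proposal is correct and follows essentially the same route as the paper: specialize the theorem to $\Lambda=\tfrac1d\I_d$, collapse $\bsP_{\pi_0}^{-(j-1)}\bsP_{\pi_0}^{i-1}$ to $\bsP_{\pi_0}^{-(j\ominus i)}$, reindex by $k=j\ominus i$, and recognize $\sum_i\ketbra{i}{i\oplus k}=\bsP_{\pi_0}^k$. Your treatment of the commutator is in fact more explicit than the paper's, which simply writes ``Hence the result'' after obtaining the polynomial-in-$\bsP_{\pi_0}$ form and leaves the commutation as understood.
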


\begin{proof}
Due to $\boldsymbol{\lambda}=(\frac1d,\cdots,\frac1d)$, i.e.,
$\Lambda=\frac1d\I_d$, we have
\begin{eqnarray*}
\Phi(\bsX)&=&\sum_{i,j=1}^{d}\frac1d\Tr{\bsP_{\pi_0}^{-(j-1)}\bsP_{\pi_0}^{i-1}\bsX}\out{i}{j}\\
&=&\sum_{i,j=1}^{d}\frac1d\Tr{\bsP_{\pi_0}^{-(j\ominus i)}\bsX}\out{i}{j}\\
&=&\sum_{k=0}^{d-1}\sum_{i=1}^{d}\frac1d\Tr{\bsP_{\pi_0}^{-k}\bsX}\out{i}{i\oplus k}\\
&=&\sum^{d-1}_{k=0}\frac1d\Tr{\bsP_{\pi_0}^{-k}\bsX}\bsP_{\pi_0}^k.
\end{eqnarray*}
Hence the result.
\end{proof}


If $d=2$, then
$\Phi(\bsX)=\frac12(\tr{\bsX}\I+\tr{\sigma_1\bsX}\sigma_1)$, where
$\sigma_1$ is a Pauli matrix. It is consistent with the
mixed-permutation channel $\Delta(\bsX) := \frac{1}{d!} \sum_{\pi
\in S_d} \bsP_\pi \bsX \bsP_\pi^\dagger$\cite{Zhang2024}. When the
input matrix $\bsX$ takes the quantum state $\rho\in
\density{\bbC^d}$, the output state can be characterized
analytically as follows. Since all density matrices are Hermitian,
$c_m(\rho)$ must be a real number for $d=2m$.

\begin{cor}
For any quantum state $\rho = \sum_{i,j=1}^{d} \rho_{ij} \out{i}{j}
\in \density{\bbC^d}$, the corresponding output state $\Phi(\rho)$
under the channel is given by
\begin{eqnarray}
        \Phi(\rho)=\sum^{d-1}_{k=0}c_k(\rho)\bsP_{\pi_0}^k,
\end{eqnarray}
where $c_0=\frac1d$ and $c_k(\rho)=\frac1d\sum_{i=1}^{d}
\rho_{i,i\oplus k}$ satisfy $c_r(\rho)=\overline{c}_{d-r}(\rho)$ for
$r=1,\cdots,d-1$.
\end{cor}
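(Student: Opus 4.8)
The plan is to specialize Corollary~\ref{t1} to the input $\bsX=\rho$ and then read off the three assertions: the displayed formula, the value $c_0=\tfrac1d$, and the conjugation symmetry $c_r(\rho)=\overline{c}_{d-r}(\rho)$. By Corollary~\ref{t1} we already have $\Phi(\rho)=\sum_{k=0}^{d-1}c_k(\rho)\bsP_{\pi_0}^k$ with $c_k(\rho)=\tfrac1d\Tr{\bsP_{\pi_0}^{-k}\rho}$, so the first claim is immediate and only (i) the entrywise rewriting of $c_k(\rho)$ and (ii) the symmetry remain.

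For (i): I would expand $\bsP_{\pi_0}^{-k}=(\bsP_{\pi_0}^k)^\dagger=\sum_{i=1}^d\ketbra{i\oplus k}{i}$ using the identity $\bsP_{\pi_0}^k=\sum_i\ketbra{i}{i\oplus k}$ from Section~\ref{sect:2prelimi}, substitute $\rho=\sum_{m,n}\rho_{mn}\ketbra{m}{n}$, and take the trace; the resulting Kronecker deltas collapse the sum to $\Tr{\bsP_{\pi_0}^{-k}\rho}=\sum_{i=1}^d\rho_{i,i\oplus k}$, hence $c_k(\rho)=\tfrac1d\sum_{i=1}^d\rho_{i,i\oplus k}$. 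Setting $k=0$ (where $i\oplus 0=i$) gives $c_0(\rho)=\tfrac1d\sum_i\rho_{ii}=\tfrac1d\Tr{\rho}=\tfrac1d$ since $\rho$ has unit trace.

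For (ii), the cleanest route avoids index juggling altogether: using $\overline{\Tr{A}}=\Tr{A^\dagger}$, the Hermiticity $\rho^\dagger=\rho$, the relation $(\bsP_{\pi_0}^{-r})^\dagger=\bsP_{\pi_0}^{r}$, and cyclicity of the trace, one gets $\overline{c_r(\rho)}=\tfrac1d\Tr{(\bsP_{\pi_0}^{-r}\rho)^\dagger}=\tfrac1d\Tr{\rho\,\bsP_{\pi_0}^{r}}=\tfrac1d\Tr{\bsP_{\pi_0}^{r}\rho}$. Finally $\bsP_{\pi_0}^d=\I_d$ yields $\bsP_{\pi_0}^{r}=\bsP_{\pi_0}^{-(d-r)}$, so $\tfrac1d\Tr{\bsP_{\pi_0}^{r}\rho}=c_{d-r}(\rho)$, that is, $c_r(\rho)=\overline{c}_{d-r}(\rho)$ for $r=1,\dots,d-1$. (In particular, for $d=2m$ the case $r=m$ forces $c_m(\rho)\in\real$, which is the remark preceding the statement.)

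There is essentially no analytic obstacle: the only point requiring care is the modular bookkeeping with $\oplus$ together with the periodicity $\bsP_{\pi_0}^d=\I_d$ — it is precisely this periodicity that allows $c_{d-r}$ to be re-expressed through $\bsP_{\pi_0}^{r}$. If instead one proves (ii) directly from $c_k(\rho)=\tfrac1d\sum_i\rho_{i,i\oplus k}$, the substantive substep is the reindexing $j=i\oplus(d-r)$, which is a bijection of $\{1,\dots,d\}$ with inverse $i=j\oplus r$; combined with $\rho_{ij}=\overline{\rho_{ji}}$ this gives the same identity, but one must check the boundary index $i=r$, where $i\oplus(d-r)=d$ (not $0$), is handled consistently.
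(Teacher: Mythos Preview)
Your proposal is correct and matches the paper's implicit reasoning: the paper states this corollary without proof, immediately after Corollary~\ref{t1} and the remark ``Since all density matrices are Hermitian, $c_m(\rho)$ must be a real number for $d=2m$,'' which signals precisely your two ingredients---specialization of Corollary~\ref{t1} together with Hermiticity of $\rho$. Your trace-based derivation of the conjugation symmetry is the cleanest way to fill in what the paper leaves tacit.
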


For every channel $\Xi:\lin{\bbC^d}\rightarrow \lin{\bbC^d}$, there
must exist a nature representation $K(\Xi)\in \lin{\bbC^{d^2}}$ such
that $\vec(\Xi(\bsX))=K(\Xi)\vec(\bsX)$, where
$(K(\Xi))_{ij}=\tr{\bsE_i^\dagger\Xi(\bsE_j)}$ for a given set of
matrix bases $\set{\bsE_i\in \lin{\bbC^d}}$ \cite{Wolf2010}.
Moreover, by the Klaus representation of $\Xi$, we can write the
linear superoperator as  $K(\Xi)=\sum_{\mu}K_\mu\ot
\overline{K}_\mu$ \cite{Kukulski2021}. The spectrum of channel $\Xi$
is defined as a list of roots of the characteristic polynomial
$\det\Pa{K(\Xi)-\lambda\I}$, where each root appears according to
its degeneracy degree \cite{Wolf2010}. If $\lambda=1$ is an
eigenvalue of $\Xi$, then all operators $\bsX$ that satisfy
$\Xi(\bsX)=\lambda\bsX=\bsX$ constitute the set of fixed points of
$\Xi$.

\begin{cor}\label{fixpoint}
The set of fixed points of the circulant quantum channel $\Phi$,
defined in Eq.~\eqref{phi}, is precisely the set of all circulant
matrices.
\end{cor}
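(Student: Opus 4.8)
The plan is to use the idempotence of $\Phi$ recorded above as part~(v) of the Proposition, namely $\Phi\circ\Phi=\Phi$. This identity forces the set of fixed points to coincide with the image: if $\bsY=\Phi(\bsX)$ then $\Phi(\bsY)=\Phi(\Phi(\bsX))=\Phi(\bsX)=\bsY$, so $\im(\Phi)\subseteq\mathrm{Fix}(\Phi)$, while trivially $\mathrm{Fix}(\Phi)\subseteq\im(\Phi)$. Hence it suffices to show that $\im(\Phi)$ is exactly the set of circulant matrices.

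One inclusion is immediate from Corollary~\ref{t1}: for every $\bsX\in\lin{\bbC^d}$ the output $\Phi(\bsX)=\sum_{k=0}^{d-1}c_k(\bsX)\bsP_{\pi_0}^k$ with $c_k(\bsX)=\frac1d\Tr{\bsP_{\pi_0}^{-k}\bsX}$ is circulant, so $\im(\Phi)$ is contained in the set of circulant matrices. For the reverse inclusion I would take an arbitrary circulant matrix $\bsC=\sum_{k=0}^{d-1}c_k\bsP_{\pi_0}^k$ and compute directly from Eq.~\eqref{phi},
\[
\Phi(\bsC)=\frac1d\sum_{j=0}^{d-1}\bsP_{\pi_0}^j\Pa{\sum_{k=0}^{d-1}c_k\bsP_{\pi_0}^k}\bsP_{\pi_0}^{-j}
=\frac1d\sum_{j=0}^{d-1}\sum_{k=0}^{d-1}c_k\,\bsP_{\pi_0}^j\bsP_{\pi_0}^k\bsP_{\pi_0}^{-j}=\sum_{k=0}^{d-1}c_k\bsP_{\pi_0}^k=\bsC,
\]
where the last step uses that all powers of the single matrix $\bsP_{\pi_0}$ commute, so $\bsP_{\pi_0}^j\bsP_{\pi_0}^k\bsP_{\pi_0}^{-j}=\bsP_{\pi_0}^k$ and the $j$-sum merely contributes the factor $d$. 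Thus every circulant matrix is a fixed point, and combining the two inclusions with $\mathrm{Fix}(\Phi)=\im(\Phi)$ gives the statement; in passing this also confirms that $1$ is an eigenvalue of the superoperator $\Phi$ with a $d$-dimensional eigenspace.

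The argument is routine and I do not anticipate a genuine obstacle. The only points requiring attention are to establish both inclusions rather than stopping at the easy one coming from Corollary~\ref{t1}, and to observe that the fact making the conjugation-averaging act trivially on circulant matrices is precisely the commutativity of the powers of $\bsP_{\pi_0}$ — equivalently, that the cyclic subgroup of $S_d$ generated by $\pi_0$ is abelian.
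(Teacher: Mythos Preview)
Your proof is correct. For the direction ``circulant $\Rightarrow$ fixed point'' the two arguments are essentially the same: you invoke commutativity of the powers of $\bsP_{\pi_0}$ directly, while the paper goes through the formula $c_k(\bsX)=\frac1d\sum_i x_{i,i\oplus k}$ from Corollary~\ref{t1} and checks $c_k(\bsC)=a_k$. The genuine difference is in the direction ``fixed point $\Rightarrow$ circulant'': the paper works entrywise, observing that $x_{i,i\oplus k}=\frac1d\sum_j x_{j,j\oplus k}$ forces all $x_{i,i\oplus k}$ to coincide as $i$ varies; you instead route through the idempotence $\Phi\circ\Phi=\Phi$ already recorded in the Proposition, use the general fact that an idempotent linear map has $\mathrm{Fix}(\Phi)=\im(\Phi)$, and then appeal to Corollary~\ref{t1} for $\im(\Phi)\subseteq\{\text{circulants}\}$. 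Your version is more conceptual and reuses earlier work; the paper's is self-contained at the matrix-entry level. One small observation: the idempotence step is in fact redundant in your write-up, since the chain $\mathrm{Fix}(\Phi)\subseteq\im(\Phi)\subseteq\{\text{circulants}\}\subseteq\mathrm{Fix}(\Phi)$ already closes using only the trivial inclusion $\mathrm{Fix}(\Phi)\subseteq\im(\Phi)$, Corollary~\ref{t1}, and your direct computation of $\Phi(\bsC)=\bsC$.
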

\begin{proof}
From the Corollary \ref{t1}, we have
$(\Phi(\bsX))_{p,q}=\sum_{k=0}^{d-1}x_{p\oplus k,q\oplus k}$ for any
matrix $\bsX$. Let $\bsX$ is the fixed point of $\Phi$, then
$x_{p,q}=(\Phi(\bsX))_{p,q}$. It holds that
\begin{eqnarray*}
x_{i,i\oplus k}=\frac1d\sum_{j=1}^{d}x_{j,j\oplus k} ~\text{for}
~\forall i\in\{1,2,\cdots,d\},~\forall k\in\{0,1,\cdots,d-1\}.
\end{eqnarray*}
Thus, $x_{i,i\oplus k}=x_{j,j\oplus k}$ for all $i,j$. That is,
$\bsX$ is a circulant matrix. On the other hand, If
$\bsX=\sum_{k=0}^{d-1}a_k\bsP_{\pi_0}^k$ is a circulant matrix, then
$\Phi(\bsX)=\sum_{k=0}^{d-1}c_k(\bsX)\bsP_{\pi_0}^k=\bsX$, where
$c_k(\bsX)=\frac1d\sum_{i=1}^{d}x_{i,i\oplus k}=a_k$.
\end{proof}

By calculation, we have
$K(\Phi)=\frac1d\sum_{k=0}^{d-1}\bsP^k_{\pi_0}\ot\bsP^{k}_{\pi_0}$.
We diagonalize $\bsP_{\pi_0}$ as
\begin{eqnarray}\label{eq:Omega}
\bsP_{\pi_0}=\bsF\Omega\bsF^\dagger, \quad
\Omega:=\diag(1,\omega,\cdots,\omega^{d-1})
\end{eqnarray}
where $\omega:=\exp\Pa{\frac{2\pi\mathrm{i}}d}$ and $\bsF$ is the
discrete Fourier transform (DFT) matrix \cite{Gray2006}. Then,
$$
K(\Phi)=(\bsF\ot\bsF)\Pa{\frac1d\sum_{k=0}^{d-1}\Omega^k\ot\Omega^{k}}(\bsF\ot\bsF)^\dagger,
$$
where
\begin{eqnarray}\label{DD}
\frac1d \sum^{d-1}_{k=0}\Omega^k\ot\Omega^k =
\sum^{d-1}_{\stackrel{i,j=0}{i+j\equiv0(\!\!\!\mod d)}}\proj{ij}.
\end{eqnarray}
Note that $K(\Phi)$ is locally unitary equivalent to Eq.~\eqref{DD}.
Thus, from Eqs.~\eqref{N} and \eqref{DD}, we obtain the spectrum of
$K(\Phi)$ is $\set{1_{(d)},0_{(d^2-d)}}$, which is also the spectrum
of the channel $\Phi$. This means that the dimension of the fixed
point space of channel $\Phi$ is $d$. Corollary~\ref{fixpoint}
indicates that this fixed point space is precisely the set of
circulant matrices.


\section{Applications of the circulant quantum channel}\label{sect:appl}

Leveraging the specific expression of the circulant quantum channel,
we demonstrate its utility in characterizing Bargmann invariants and
quantifying quantum coherence. As an illustrative example, we show
that these channels yield a tighter lower bound for the
$\ell_p$-norm ($p=1,2$) measure of coherence.

\subsection{Restatement of the relevant results of Bargmann invariants}

For $n$-tuple of quantum pure states
$\Psi=(\ket{\psi_1},\cdots,\ket{\psi_n})$ on $\bbC^d$, the Gram
matrix of $\Psi$ is $\bsG(\Psi):=(\Inner{\psi_i}{\psi_j})_{n}$.
Clearly $\bsG(\Psi)$ is positive semidefinite with principal
diagonal entries being one. The Bargmann invariant corresponding to
$\Psi$ is
\begin{eqnarray*}
\Tr{\psi_1\cdots\psi_n}=\prod_{i=1}^n\inner{\psi_i}{\psi_{i\oplus
1}}=\prod_{i=1}^n\bsG(\Psi)_{i,i\oplus1},
\end{eqnarray*}
where $\psi\equiv\proj{\psi}$. By using the circulant channel
$\Phi$, we can reproduce the results in \cite{Xu2026PLA}. For
completeness, we reformulate these results and their proofs in our
language.

\begin{cor}[\cite{Xu2026PLA}]\label{prop:phaseinvariant}
Given any $n$-tuple of complex unit vectors in $\bbC^d$,
$\Psi=(\ket{\psi_1},\ldots,\ket{\psi_n})$ with
$\Tr{\psi_1\cdots\psi_n}\neq0$, where $\psi_k\equiv\proj{\psi_k}$,
there exists a $n$-tuple of unit vectors
$\widetilde\Psi=(\ket{\tilde\psi_1},\ldots,\ket{\tilde\psi_n})$ such
that the Gram matrix
$G(\widetilde\Psi)=(\inner{\tilde\psi_i}{\tilde\psi_j})_{n\times n}$
is a circular matrix with
\begin{enumerate}[(i)]
\item
$\inner{\tilde\psi_1}{\tilde\psi_2}=\inner{\tilde\psi_2}{\tilde\psi_3}=\cdots=\inner{\tilde\psi_{n-1}}{\tilde\psi_n}=\inner{\tilde\psi_n}{\tilde\psi_1}$,
\item $\arg \Tr{\psi_1\cdots\psi_n}=\arg
\Tr{\tilde\psi_1\cdots\tilde\psi_n}$,
\item $\Abs{\Tr{\psi_1\cdots\psi_n}}\leqslant
\Abs{\Tr{\tilde\psi_1\cdots\tilde\psi_n}}$, where $\arg
z\in[0,2\pi)$ is the principle argument of the complex number $z$.
\end{enumerate}
\end{cor}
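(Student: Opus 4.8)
The plan is to construct $\widetilde\Psi$ explicitly from $\Psi$ by first applying a diagonal phase gauge to the vectors so that all the ``nearest-neighbour'' overlaps $\inner{\psi_i}{\psi_{i\oplus1}}$ become nonnegative reals with a common phase, and then ``symmetrizing'' the resulting Gram matrix by the circulant channel $\Phi$ acting on an $n\times n$ matrix. Concretely, write $\inner{\psi_i}{\psi_{i\oplus1}} = r_i e^{\mathrm{i}\theta_i}$ with $r_i\geqslant 0$; since $\Tr{\psi_1\cdots\psi_n}=\prod_i r_i e^{\mathrm{i}\theta_i}\neq 0$, all $r_i>0$. Choose phases $\alpha_1,\ldots,\alpha_n$ recursively so that replacing $\ket{\psi_i}$ by $e^{\mathrm{i}\alpha_i}\ket{\psi_i}$ makes each consecutive overlap equal to $r_i e^{\mathrm{i}\beta}$ for a single common phase $\beta$ with $n\beta \equiv \sum_i\theta_i \pmod{2\pi}$; this is possible because only $n-1$ of the phase constraints are independent (the product around the cycle is gauge-invariant), and the leftover degree of freedom fixes $\beta$ as the average argument. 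After this regauging the Bargmann invariant is unchanged (it is phase-gauge invariant), and $\arg\Tr{\psi_1\cdots\psi_n}=n\beta \bmod 2\pi$, which will match (ii) once we check the symmetrized vectors reproduce the same common phase.

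The second step is to embed the $n$ regauged vectors as columns $\ket{\psi_i}\in\bbC^d$ into a single matrix $M=\sum_i \ket{\psi_i}\!\bra{i}\in\Lin{\bbC^n\to\bbC^d}$ (or, staying inside square matrices, pass to the Gram matrix $\bsG=M^\dagger M\in\Lin{\bbC^n}$) and apply the circulant channel $\Phi$ on $\bbC^n$. By Corollary~\ref{t1}, $\Phi(\bsG)$ is a circulant matrix $\sum_{k=0}^{n-1}c_k(\bsG)\bsP_{\pi_0}^k$ with $c_k(\bsG)=\frac1n\sum_{i=1}^n \bsG_{i,i\oplus k}$; in particular $c_0(\bsG)=1$, and $c_1(\bsG)=\frac1n\sum_i r_i e^{\mathrm{i}\beta} = \bar r\, e^{\mathrm{i}\beta}$ with $\bar r = \frac1n\sum_i r_i>0$. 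However $\Phi(\bsG)$ need not be positive semidefinite of rank $\leqslant d$, so it is not itself a Gram matrix of $n$ vectors in $\bbC^d$; the genuine construction of $\widetilde\Psi$ is instead to take the \emph{group average of the configuration}: define $\ket{\tilde\psi_i}$ to live in $\bbC^d\otimes\bbC^n$ (or in $\bbC^{dn}$) by $\ket{\tilde\psi_i}:=\frac1{\sqrt n}\sum_{k=0}^{n-1}\ket{\psi_{i\oplus k}}\otimes\ket{k}$, so that $\inner{\tilde\psi_i}{\tilde\psi_j}=\frac1n\sum_{k}\inner{\psi_{i\oplus k}}{\psi_{j\oplus k}}=\bsG_{\text{circ}}$ depends only on $j\ominus i$. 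This manifestly gives a circulant Gram matrix, establishes (i) with common value $c_1(\bsG)=\bar r e^{\mathrm{i}\beta}$, and since that value has argument $\beta$, the new Bargmann invariant $\prod_i\inner{\tilde\psi_i}{\tilde\psi_{i\oplus1}}=(\bar r e^{\mathrm{i}\beta})^n$ has argument $n\beta\bmod 2\pi$, giving (ii).

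For (iii), the inequality $\Abs{\Tr{\psi_1\cdots\psi_n}}=\prod_i r_i\leqslant \bar r^{\,n}=\Abs{\Tr{\tilde\psi_1\cdots\tilde\psi_n}}$ is exactly the AM–GM inequality applied to the positive numbers $r_1,\ldots,r_n$, with equality iff all $r_i$ are equal. The main obstacle in writing this cleanly is the phase bookkeeping in step one: one must verify that the cyclic product constraint leaves precisely enough gauge freedom to make \emph{all} $n$ overlaps share a common phase (not just $n-1$ of them) and that the common phase is forced to be $\frac1n\sum_i\theta_i$ modulo $\frac{2\pi}{n}$ — and then check that this residual $\frac{2\pi}{n}$ ambiguity is harmless because it does not affect $\arg$ of the $n$-fold product. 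A secondary point requiring care is dimensionality: the vectors $\ket{\tilde\psi_i}$ a priori live in $\bbC^{dn}$, so if the statement insists $\widetilde\Psi\subseteq\bbC^d$ one needs the extra observation (as in \cite{Xu2026PLA,Li2025}) that a circulant positive semidefinite matrix of the given rank can be realized by $n$ vectors in a space of dimension equal to the number of nonzero Fourier coefficients, which is at most $d$ when the original configuration spanned at most $d$ dimensions; I would either invoke that reduction or state the result with the ambient dimension enlarged, matching whichever convention the cited papers use.
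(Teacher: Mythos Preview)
Your proposal is correct and follows the same two-step strategy as the paper: first a diagonal phase gauge to equalize the arguments of the consecutive overlaps, then cyclic averaging to obtain a circulant Gram matrix, with AM--GM supplying the modulus inequality in (iii). The phase bookkeeping you flag as ``the main obstacle'' is handled in the paper exactly as you outline, by solving the recursion $\alpha_{k\oplus1}-\alpha_k=\tfrac{\theta}{n}-\theta_k$.

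The one substantive difference is in Step~2. The paper simply applies $\Phi$ to $G(\Psi')$ and notes that, because $\Phi$ is completely positive, $\Phi(G(\Psi'))$ is automatically positive semidefinite with unit diagonal, hence is the Gram matrix of \emph{some} tuple $\widetilde\Psi$ (citing \cite{chefles2004}); the statement places no constraint on the ambient dimension of $\widetilde\Psi$, so no rank bound is needed. Your worry that ``$\Phi(\bsG)$ need not be positive semidefinite of rank $\leqslant d$'' is therefore half misplaced---positivity is free from complete positivity---and the rank half is moot. That said, your explicit block-diagonal embedding $\ket{\tilde\psi_i}=\tfrac1{\sqrt n}\sum_{k}\ket{\psi_{i\oplus k}}\ot\ket{k}\in\bbC^d\ot\bbC^n$ is a perfectly valid alternative: it produces exactly the same Gram matrix $\Phi(G(\Psi'))$ and has the virtue of making realizability self-contained rather than relying on an external citation.
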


\begin{proof}
We first clearly demonstrate how to adjust the phases of the
inner-product factors in the $n$th order Bargmann invariant
$\Inner{\psi_1}{\psi_2}\Inner{\psi_2}{\psi_3}\cdots
\Inner{\psi_n}{\psi_1}=re^{\mathrm{i}\theta}$: we transform the $n$
factors $\Inner{\psi_k}{\psi_{k\oplus1}}=r_ke^{\mathrm{i}\theta_k}$
for $k\in\set{1,\ldots,n}$, originally carrying distinct phases
$e^{\mathrm{i}\theta_k}$, into a new set of factors
$\Inner{\psi'_k}{\psi'_{k\oplus1}}=r_ke^{\mathrm{i}\frac\theta n}$
where each possesses a common, uniform phase
$e^{\mathrm{i}\frac\theta n}$. Subsequently, building on this phase
alignment, we apply a circulant quantum channel to equalize the
moduli of all inner-product factors while maintaining their uniform
phase. This step transforms the corresponding Gram matrix $G(\Psi')$
into a circulant matrix, which, by construction, is precisely the
Gram matrix of the target tuple  $\tilde\Psi$. The proof then
continues as follows.

Let $\Tr{\psi_1\cdots\psi_n}=re^{\mathrm{i}\theta}$ with
$r=\abs{\Tr{\psi_1\cdots\psi_n}}>0$ and $\theta\in[0,2\pi)$. Let
$\Inner{\psi_k}{\psi_{k\oplus1}}=r_ke^{\mathrm{i}\theta_k}$, where
$r_k=\abs{\Inner{\psi_k}{\psi_{k\oplus1}}}>0$ and
$\theta_k\in[0,2\pi)$ for $k\in\set{1,2,\ldots,n}$. Thus
$r=r_1r_2\cdots r_n$.\\
\textbf{Step 1: Based on $\Psi$, we define a new tuple of unit
vectors $\Psi'=(\ket{\psi'_1},\ldots,\ket{\psi'_n})$}. In fact,
$\ket{\psi'_k}=\bsU\ket{\psi_k}$ for some unitary $\bsU\in\sfU(d)$
if and only if there exists a diagonal unitary
$\bsT\in\sfU(1)^{\times n}$ such that $G(\Psi')=\bsT^\dagger
G(\Psi)\bsT$. Thus we can choose suitable
$\bsT=\diag(e^{\mathrm{i}\alpha_1},\ldots,e^{\mathrm{i}\alpha_n})$,
and construct $\Psi'=(\ket{\psi'_1},\ldots,\ket{\psi'_n})$ by
defining $\ket{\psi'_k} :=e^{\mathrm{i}\alpha_k}\ket{\psi_k}$, where
$k\in\set{1,\ldots,n}$. We require the new tuple $\Psi'$ to satisfy
the condition that all consecutive inner products have the same
phase:
\begin{eqnarray*}
\arg\Inner{\psi'_1}{\psi'_2}=\arg\Inner{\psi'_2}{\psi'_3}=\cdots=\arg\Inner{\psi'_{n-1}}{\psi'_n}=\arg\Inner{\psi'_n}{\psi'_1}=\frac\theta
n
\end{eqnarray*}
which leads to the following constraints on the phase angles
$\set{\alpha_k\mid k=1,\ldots,n}\subset\bbR$:
$$
\alpha_j=\alpha_1+(j-1)\frac\theta n-\sum^{j-1}_{k=1}\theta_k,\quad
j\in\set{2,\ldots,n}.
$$
In what follows, we explain about why we can do this! That is, we
can show that the existence of $\alpha_k$'s satisfying the
constraints. Indeed, we rewrite
$\Tr{\psi_1\cdots\psi_n}=re^{\mathrm{i}\theta}$ in two ways:
\begin{eqnarray*}
\begin{cases}
re^{\mathrm{i}\theta}=(r_1e^{\mathrm{i}\frac\theta n})\cdots (r_n
e^{\mathrm{i}\frac\theta n}),\\
re^{\mathrm{i}\theta}=(r_1e^{\mathrm{i}\theta_1})\cdots
(r_ne^{\mathrm{i}\theta_n}).
\end{cases}
\end{eqnarray*}
What we want is that $r_ke^{\mathrm{i}\frac\theta n}$ will be
identified with
$r_ke^{\mathrm{i}\theta_k}=\Inner{\psi_k}{\psi_{k\oplus1}}$ \emph{up
to a phase factor}. This leads to the definition $\ket{\psi'_k}
:=e^{\mathrm{i}\alpha_k}\ket{\psi_k}$ for some $\alpha_k\in\bbR$
\emph{to be determined}. In such definition, we hope that
$r_ke^{\mathrm{i}\frac\theta n}=\inner{\psi'_k}{\psi'_{k\oplus1}}$.
Assuming this, we get that
\begin{eqnarray*}
r_ke^{\mathrm{i}\frac\theta n}&=&\inner{\psi'_k}{\psi'_{k\oplus1}}=
e^{\mathrm{i}(\alpha_{k\oplus1}-\alpha_k)}\inner{\psi_k}{\psi_{k\oplus1}}\\
&=&
e^{\mathrm{i}(\alpha_{k\oplus1}-\alpha_k)}r_ke^{\mathrm{i}\theta_k}=r_ke^{\mathrm{i}(\alpha_{k\oplus1}-\alpha_k+\theta_k)}.
\end{eqnarray*}
Under the assumption about the existence of
$\set{\alpha_k}^n_{k=1}\subset\bbR$, we derive the constraints: The
requirement concerning unknown constants $\alpha_k$'s are determined
by
\begin{eqnarray*}
\frac\theta n=\alpha_{k\oplus1}-\alpha_k+\theta_k
\Longleftrightarrow \alpha_{k\oplus1}-\alpha_k=\frac\theta
n-\theta_k,
\end{eqnarray*}
implying that
\begin{eqnarray*}
\alpha_{j\oplus1}=\alpha_1+(j-1)\frac\theta
n-\sum^{j-1}_{k=1}\theta_k,\quad j\in\set{2,\ldots,n}.
\end{eqnarray*}
Note that $\alpha_1$ is chosen freely. Let
\begin{eqnarray*}
g_j\equiv g_j(\theta,\theta_1,\ldots,\theta_n):=(j-1)\frac\theta
n-\sum^{j-1}_{k=1}\theta_k,\quad j\in\set{2,\ldots,n}.
\end{eqnarray*}
Thus
$\bsT=e^{\mathrm{i}\alpha_1}\diag\Pa{1,e^{\mathrm{i}g_2},\ldots,e^{\mathrm{i}g_n}}$.
Thus $\Psi'$ can be defined reasonably by using such constants
$\alpha_k$'s satisfying the mentioned constraints. In this
situation, $G(\Psi')=\bsT^\dagger G(\Psi)\bsT$ for
$\bsT=\diag(e^{\mathrm{i}\alpha_1},\ldots,e^{\mathrm{i}\alpha_n})\in\sfU(1)^{\times
n}$. \\
\textbf{Step 2: Based on $\Psi'$, we define a new tuple of wave
functions
$\tilde\Psi=(\ket{\tilde\psi_1},\ldots,\ket{\tilde\psi_n})$}. By
acting the \emph{circulant quantum channel} $\Phi$ on the Gram
matrix $G(\Psi')$ for the constructed tuple of unit vectors $\Psi'$
in step 1, we have that $\zero\leqslant\Phi(G(\Psi'))\in\cC_n$. Thus
there exists some tuple of unit vectors
$\widetilde\Psi=(\ket{\tilde\psi_1},\ldots,\ket{\tilde\psi_n})$ such
that $G(\widetilde\Psi) = \Phi(G(\Psi'))$ \cite{chefles2004}, and
\begin{eqnarray*}
\inner{\tilde\psi_j}{\tilde\psi_{j\oplus1}}&=&[G(\widetilde\Psi)]_{j,j\oplus1}=[\Phi(G(\Psi'))]_{j,j\oplus1}\\
&=&\frac1n\sum^n_{k=1}\inner{\psi'_k}{\psi'_{k\oplus1}}=\Pa{\frac1n\sum^n_{k=1}r_k}e^{\mathrm{i}\frac\theta
n},\quad\forall j\in\set{1,2,\ldots,n}.
\end{eqnarray*}
By the Arithmetic-Geometric Mean (AM-GM) Inequality, we have
\begin{eqnarray*}
&&\Abs{\Tr{\psi_1\cdots\psi_n}} =
\Abs{\prod^n_{k=1}\Inner{\psi_k}{\psi_{k\oplus1}}} =
\prod^n_{k=1}r_k\\
&&\leqslant \Pa{\frac1n\sum^n_{k=1}r_k}^n =
\Abs{\prod^n_{k=1}\Inner{\tilde\psi_k}{\tilde\psi_{k\oplus1}}}=
\Abs{\Tr{\tilde\psi_1\cdots\tilde\psi_n}}.
\end{eqnarray*}
In addition, we also see that
\begin{eqnarray*}
\begin{cases}
\Tr{\psi_1\cdots\psi_n}
=\Pa{\prod^n_{k=1}r_k}e^{\mathrm{i}\theta},\\
\Tr{\tilde\psi_1\cdots\tilde\psi_n}
=\Pa{\frac1n\sum^n_{k=1}r_k}^ne^{\mathrm{i}\theta}.
\end{cases}
\end{eqnarray*}
This completes the proof.
\end{proof}

We should remark here that, for the logical integrity of our work,
we reconstructed Xu's description of the Bargmann invariant set
within the framework of circulant quantum channels. Based on
Corollary~\ref{prop:phaseinvariant}, it is easily shown that
$\cB_n=\cB_{n|\mathrm{circ}}$, where $\cB_n$ is the set of all $n$th
order Bargmann invariants, and $\cB_{n|\mathrm{circ}}(\subset
\cB_n)$ is the subset of $n$th Bargmann invariants resulted from
circulant Gram matrices which is formed by special tuples of pure
states \cite{Xu2026PLA}. Indeed, for $z\in\cB_n$, if $z=0$, then
apparently $z=0\in\cB_{n|\mathrm{circ}}$; if $z\neq0$, which can be
realized as $z=\Tr{\psi_1\cdots\psi_n}$ for an $n$-tuple of unit
vectors $\Psi=(\ket{\psi_1},\ldots,\ket{\psi_n})$. Thus there exists
$\tilde z=\Tr{\tilde\psi_1\cdots\tilde\psi_n}$ by
Corollary~\ref{prop:phaseinvariant} such that
\begin{eqnarray*}
z=\frac{\prod^n_{k=1}r_k}{\Pa{\frac1n\sum^n_{k=1}r_k}^n}\tilde
z\in\cB_n|_{\mathrm{circ}}
\end{eqnarray*}
While this recasts known findings instead of presenting new ones, it
provides a different perspective that underscores the utility of
circulant quantum channels in characterizing these invariant sets.
In the following, we will give applications of the circulant quantum
channel to estimating coherence.

\subsection{Applying the circulant channel to estimate quantum coherence}

Given the reference basis $\set{\ket{i}}_{i=1}^d$, if the density
matrix $\rho$ of a quantum state is diagonal under this set of
bases, $\rho=\sum_{i=1}^{d}p_i\out{i}{i}$, then the state is called
an \emph{incoherent state}. The set of all incoherent states is
denoted by $\cI(\bbC^d)$. The so-called \emph{incoherent operations}
$\cE(*)=\sum_\nu K_\nu (*)K^\dagger_\nu$ are quantum channels that
satisfy condition $\frac{K_{\nu}\delta
K_{\nu}^{\dagger}}{\Tr{K_{\nu}\delta
K_{\nu}^{\dagger}}}\in\cI(\bbC^d)$ for all $\delta\in\cI(\bbC^d)$
and for all $\nu$, converting incoherent states into incoherent
states.

For the circulant quantum channel, we find that it maps any
incoherent state to the maximum mixed state.

\begin{cor}
$\Phi(\rho)$ is the maximum mixed state $\frac1d\I_d$ for any
incoherent states $\rho = \sum_{i=1}^{d} \lambda_{i} \proj{i}\in
\cI(\bbC^d)$.
\end{cor}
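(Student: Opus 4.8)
The plan is to read this off directly from Corollary~\ref{t1}, which already expresses the output as $\Phi(\rho)=\sum_{k=0}^{d-1}c_k(\rho)\bsP_{\pi_0}^k$ with $c_k(\rho)=\frac1d\Tr{\bsP_{\pi_0}^{-k}\rho}$, and then to evaluate these coefficients on a diagonal input. First I would write $\rho=\sum_{i=1}^d\lambda_i\proj{i}$ and recall from Section~\ref{sect:2prelimi} that $\bsP_{\pi_0}^{-k}=\bigl(\bsP_{\pi_0}^{k}\bigr)^\dagger=\sum_{i=1}^d\ketbra{i\oplus k}{i}$, so that $\bra{i}\bsP_{\pi_0}^{-k}\ket{i}=\braket{i|i\oplus k}$, which equals $1$ when $k=0$ and vanishes for every $k\in\set{1,\ldots,d-1}$. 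Consequently $c_k(\rho)=\frac1d\sum_{i=1}^d\lambda_i\bra{i}\bsP_{\pi_0}^{-k}\ket{i}=0$ for $k\neq0$, whereas $c_0(\rho)=\frac1d\sum_{i=1}^d\lambda_i=\frac1d\Tr{\rho}=\frac1d$. Substituting back gives $\Phi(\rho)=c_0(\rho)\bsP_{\pi_0}^{0}=\frac1d\I_d$.

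An equivalent, perhaps more transparent, route bypasses the coefficient formula. For each $k$ the conjugation $\bsP_{\pi_0}^k\rho\bsP_{\pi_0}^{-k}=\sum_{i=1}^d\lambda_i\proj{\pi_0^k(i)}$ is again diagonal, being merely a relabeling of the populations by the permutation $\pi_0^k$. Summing over $k=0,\ldots,d-1$ and using that $\pi_0$ is a single $d$-cycle---so that for each fixed index $j$ the value $\pi_0^k(j)$ runs once through all of $\set{1,\ldots,d}$ as $k$ ranges over $\set{0,\ldots,d-1}$---each projector $\proj{j}$ collects total weight $\sum_{i=1}^d\lambda_i=1$. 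Dividing by $d$ yields $\Phi(\rho)=\frac1d\sum_{j=1}^d\proj{j}=\frac1d\I_d$.

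I do not expect a genuine obstacle: the claim is an immediate specialization of the image formula already proved, and the only structural fact exploited is the full-orbit (cyclic) nature of $\pi_0$, which forces the off-diagonal circulant coefficients to average to zero while normalizing the diagonal one to $1/d$. I would close by noting consistency with Corollary~\ref{fixpoint}: since $\frac1d\I_d$ is itself circulant it is a fixed point of $\Phi$, so $\Phi$ indeed collapses the entire set $\cI(\bbC^d)$ onto this single point.
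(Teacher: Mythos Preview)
Your proposal is correct; in fact your second route is essentially the paper's own proof---it conjugates the diagonal state by each $\bsP_{\pi_0}^k$, observes the result is $\sum_i\lambda_i\proj{\pi_0^k(i)}$, and then averages over $k$ using the full cyclic orbit to obtain $\frac1d\I_d$. Your first route via the coefficients $c_k(\rho)$ from Corollary~\ref{t1} is a valid alternative that the paper does not use, but it is no more than a repackaging of the same computation.
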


\begin{proof}
Since $\bsP_{\pi_0}^k\rho \bsP_{\pi_0}^{-k}=\sum_{i=1}^{d}
\lambda_{i}\ketbra{\pi_0^k(i)}{\pi_0^k(i)}\in
\mathcal{I}(\mathbb{C}^d)$, it follows that
$$
\Phi(\rho)=\frac{1}{d}\sum_{k=0}^{d-1}\sum_{i=1}^{d}
\lambda_{i}\ketbra{\pi_0^k(i)}{\pi_0^k(i)}=\frac1d\I_d.
$$
We are done.
\end{proof}

For every Kraus operator $\frac{\bsP^k_{\pi_0}}{\sqrt{d}}$ and any
incoherent states
$\delta=\sum_{i}^{d}\lambda_i\proj{i}\in\cI(\bbC^d)$, it holds that
$$
\frac{\bsP^k_{\pi_0}}{\sqrt{d}}\delta\frac{\bsP^{-k}_{\pi_0}}{\sqrt{d}}/\Tr{\frac{\bsP^k_{\pi_0}}{\sqrt{d}}\delta\frac{\bsP^{-k}_{\pi_0}}{\sqrt{d}}}=\sum_{i}^{d}\lambda_i\proj{\pi_0(i)}\in\cI(\bbC^d),
$$
so $\Phi$ is a incoherent operation. For any coherence measure $C$
and any quantum state $\rho \in \density{\bbC^d}$, the quantum
coherence of $\rho$ is bounded from below by the quantum coherence
of $\Phi(\rho)$, i.e. $  C(\rho) \geqslant C(\Phi(\rho))$.

Considering the $\ell_1$-norm coherence of the quantum state
$\rho=\sum_{i,j}\rho_{ij}\,|i\rangle\!\langle j|
$ is the sum of the magnitudes of all the off-diagonal entries:
\begin{eqnarray}\label{eq:cl1}
C_{\ell_1}(\rho):=\sum_{i\neq j}\abs{\rho_{ij}}.
\end{eqnarray}
Unlike the $\ell_1$-norm coherence, the $\ell_2$-norm coherence is defined as the sum of the squares of the magnitudes of the off-diagonal elements:
\begin{eqnarray}\label{eq:cl2}
    C_{\ell_2}(\rho):=\sum_{i\neq j}\abs{\rho_{ij}}^2.
\end{eqnarray}

The circulant quantum channel $\Phi$ can also be used to estimate
quantum coherence, providing a tighter lower bound for the
$\ell_1$-norm coherence and the $\ell_2$-norm coherence than for the
mixed-permutation channel $\Delta(\cdot)=\frac1{d!}\sum_{\pi\in
S_d}\bsP_{\pi}(\cdot)\bsP^\dagger_{\pi}$ \cite{Zhang2024}.

\begin{cor}
For any quantum state $\rho \in \density{\bbC^d}$, the $\ell_p$-norm
$(p=1,2)$ coherence of out states satisfy
\begin{eqnarray}
C_{\ell_p}(\rho) \geqslant C_{\ell_p}(\Phi(\rho))\geqslant
C_{\ell_p}(\Delta(\rho)).
\end{eqnarray}
\end{cor}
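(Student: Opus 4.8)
\medskip
\noindent\emph{Proof strategy.} The plan is to make both inequalities transparent by writing the off-diagonal entries of $\Phi(\rho)$ and of $\Delta(\rho)$ in closed form and then invoking only the triangle inequality (for $p=1$) and the Cauchy--Schwarz inequality (for $p=2$). Introduce the abbreviations $a_k:=\sum_{i=1}^{d}\rho_{i,\,i\oplus k}$ for $k\in\set{1,\ldots,d-1}$ and $S:=\sum_{i\neq j}\rho_{ij}$; since, for fixed $i$, the index $i\oplus k$ runs through all values $\neq i$ as $k$ runs through $\set{1,\ldots,d-1}$, the pairs $\set{(i,i\oplus k)}$ list each off-diagonal position exactly once, whence $S=\sum_{k=1}^{d-1}a_k$. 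By Corollary~\ref{t1} (and its density-matrix specialisation), $\Phi(\rho)=\sum_{k=0}^{d-1}c_k(\rho)\bsP_{\pi_0}^{k}$ with $c_k(\rho)=a_k/d$; because $\bsP_{\pi_0}^{k}$ carries exactly $d$ unit entries, located at the positions $(i,i\oplus k)$, the off-diagonal part of $\Phi(\rho)$ consists of $d$ copies of $a_k/d$ for each $k\in\set{1,\ldots,d-1}$. Hence $C_{\ell_1}(\Phi(\rho))=\sum_{k=1}^{d-1}\abs{a_k}$ and $C_{\ell_2}(\Phi(\rho))=\frac1d\sum_{k=1}^{d-1}\abs{a_k}^2$.

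\medskip
Next I would pin down $\Delta(\rho)$. Averaging $\bsP_\pi(\cdot)\bsP_\pi^\dagger$ over $S_d$ in the defining representation on $\bbC^d$ yields an operator lying in the commutant of $\set{\bsP_\pi:\pi\in S_d}$, which is the two-dimensional algebra $\spn\set{\I_d,\bsJ_d}$ with $\bsJ_d$ the all-ones matrix. Writing $\Delta(\rho)=\alpha\I_d+\beta\bsJ_d$ and reading off the two coefficients from $\Tr{\Delta(\rho)}=\Tr{\rho}=1$ and from the quadratic form of $\Delta(\rho)$ on the all-ones vector (which is fixed by every $\bsP_\pi$, so this form equals $\sum_{ij}\rho_{ij}=1+S$) gives $\beta=S/(d(d-1))$; that is, every off-diagonal entry of $\Delta(\rho)$ equals $S/(d(d-1))$. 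Therefore $C_{\ell_1}(\Delta(\rho))=\abs{S}$ and $C_{\ell_2}(\Delta(\rho))=\abs{S}^2/(d(d-1))$.

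\medskip
With these formulas in place the conclusion is immediate. For $p=1$, the triangle inequality applied entrywise gives $C_{\ell_1}(\rho)=\sum_{i\neq j}\abs{\rho_{ij}}\geqslant\sum_{k=1}^{d-1}\abs{a_k}=C_{\ell_1}(\Phi(\rho))$, and applied to the $d-1$ numbers $a_k$ it gives $\sum_{k=1}^{d-1}\abs{a_k}\geqslant\Abs{\sum_{k=1}^{d-1}a_k}=\abs{S}=C_{\ell_1}(\Delta(\rho))$. For $p=2$, Cauchy--Schwarz in the form $\abs{a_k}^2\leqslant d\sum_{i}\abs{\rho_{i,i\oplus k}}^2$ gives $C_{\ell_2}(\rho)\geqslant\frac1d\sum_{k=1}^{d-1}\abs{a_k}^2=C_{\ell_2}(\Phi(\rho))$, while Cauchy--Schwarz in the form $\Abs{\sum_{k=1}^{d-1}a_k}^2\leqslant(d-1)\sum_{k=1}^{d-1}\abs{a_k}^2$ rearranges to $C_{\ell_2}(\Delta(\rho))=\abs{S}^2/(d(d-1))\leqslant\frac1d\sum_{k=1}^{d-1}\abs{a_k}^2=C_{\ell_2}(\Phi(\rho))$.

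\medskip
I expect the only non-routine step to be the closed form for $\Delta(\rho)$: establishing that the $S_d$-twirl collapses onto $\spn\set{\I_d,\bsJ_d}$ and extracting the common off-diagonal value $S/(d(d-1))$. Everything else reduces to bookkeeping plus the triangle and Cauchy--Schwarz inequalities. The left-hand inequality $C_{\ell_p}(\rho)\geqslant C_{\ell_p}(\Phi(\rho))$ for $p=1$ could instead be quoted from the monotonicity of coherence under the incoherent operation $\Phi$ noted above, but I would prefer the direct entrywise computation so that the $p=2$ case is handled uniformly, which is prudent since $C_{\ell_2}$ is not in general a coherence monotone.
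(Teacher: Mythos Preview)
Your proposal is correct and largely parallels the paper's argument: both compute the closed forms $C_{\ell_1}(\Phi(\rho))=\sum_{k=1}^{d-1}\abs{a_k}$, $C_{\ell_1}(\Delta(\rho))=\abs{S}$, $C_{\ell_2}(\Phi(\rho))=\tfrac1d\sum_{k}\abs{a_k}^2$, $C_{\ell_2}(\Delta(\rho))=\tfrac{\abs{S}^2}{d(d-1)}$, then use the triangle inequality for $p=1$ and Cauchy--Schwarz for the second inequality at $p=2$.

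The one genuine methodological difference is the first inequality at $p=2$. The paper writes ``From $\Phi$ is a[n] incoherent [operation], it holds that $C_{\ell_2}(\rho)\geqslant C_{\ell_2}(\Phi(\rho))$'', i.e.\ it appeals to monotonicity. You instead prove it directly via $\abs{a_k}^2\leqslant d\sum_i\abs{\rho_{i,i\oplus k}}^2$. Your route is the safer one, since---as you note---$C_{\ell_2}$ is not a coherence monotone in general; the paper's sentence is really shorthand for the convexity-plus-permutation-invariance argument specific to this random-permutation channel, which it does not spell out. Your derivation of the off-diagonal value of $\Delta(\rho)$ via the commutant $\spn\set{\I_d,\bsJ_d}$ is also more self-contained than the paper, which simply quotes the formula from \cite{Zhang2024}.
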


\begin{proof}
By definition, we have
\begin{eqnarray*}
C_{\ell_1}(\Phi(\rho))=\sum_{k=1}^{d-1}\Abs{\sum_{i=1}^{d}\rho_{i,i\oplus
k}},\quad  C_{\ell_1}(\Delta(\rho))=\Abs{\sum_{i\neq j}\rho_{i,j}}.
\end{eqnarray*}
Then by the triangle inequality, it holds that
\begin{eqnarray*}
C_{\ell_1}(\rho) \geqslant C_{\ell_1}(\Phi(\rho))\geqslant
C_{\ell_1}(\Delta(\rho)).
\end{eqnarray*}
Indeed, $C_{\ell_1}(\Phi(\rho))=\sum_{k=1}^{d-1}|c_k(\rho)|=
2d\sum_{i=1}^{m}\abs{c_i(\rho)}$ for $d=2m+1$,
$C_{\ell_1}(\Phi(\rho))=
2d\sum_{i=1}^{m-1}\abs{c_i(\rho)}+dc_m(\rho)$ for $d=2m$.

From $\Phi$ is a incoherent, it holds that
$C_{\ell_2}(\rho)\geqslant C_{\ell_2}(\Phi(\rho))$. By the
definition of the $\ell_2$-norm coherence, we obtain
\begin{eqnarray*}
    C_{\ell_2}(\Phi(\rho))=\frac1d\sum_{k=1}^{d-1}\Abs{\sum_{i=1}^{d}\rho_{i,i\oplus
            k}}^2,\quad
    C_{\ell_2}(\Delta(\rho))=\frac{1}{d(d-1)}\Abs{\sum_{i\neq j}\rho_{i,j}}^2.
\end{eqnarray*}
Using the Cauchy-Schwarz inequality, then
\begin{eqnarray*}
\Abs{\sum_{i\neq
j}\rho_{i,j}}^2\leqslant\Pa{\sum_{k=1}^{d-1}\Abs{\sum_{i=1}^{d}\rho_{i,i\oplus
                k}}}^2\leqslant(d-1)\sum_{k=1}^{d-1}\Abs{\sum_{i=1}^{d}\rho_{i,i\oplus
                k}}^2.
\end{eqnarray*}
Thus, $C_{\ell_2}(\Phi(\rho))\geqslant C_{\ell_2}(\Delta(\rho))$.
\end{proof}

Based on the generalized Gel-Mann representation of the qutrit state, we can obtain the changes in the generalized Bloch vector of the quantum state before and after the action of the circulant quantum channel.

\begin{exam}
Any qutrit state can be represented as \cite{Zhang2024ps}
\begin{eqnarray*}
    \rho=\frac13\Pa{\I_3+\sqrt{3}\bsr\cdot\bG},
\end{eqnarray*}
where $\bsr=(r_1,\cdots,r_8)\in\bbC^8$, $\bG=(\bsG_1,\cdots,\bsG_8)$
is the vector of Gell-Mann matrices on $\bbC^3$. Then
$\Phi(\rho)=c_0\I_3+c_1\bsP_{\pi_{0}}+\bar{c}_1 \bsP_{\pi_{0}}^2$,
where
$c_0=\frac13,~c_1=\frac{1}{3\sqrt{3}}(r_1+r_4+r_6-\mathrm{i}(r_2-r_5+r_7))$.
It is easily seen that the action of $\Phi$ is
\begin{eqnarray*}
    \Phi(\rho)=\frac13\Pa{\I_3+\sqrt{3}\bsr'\cdot\bG}=\Pa{
        \begin{array}{ccc}
            \frac{1}{3} & \frac{r_1+r_4+r_6-\mathrm{i}(r_2-r_5+r_7)}{3 \sqrt{3}} & \frac{r_1+r_4+r_6+\mathrm{i}(r_2-r_5+r_7)}{3 \sqrt{3}} \\
            \frac{r_1+r_4+r_6+\mathrm{i}(r_2-r_5+r_7)}{3 \sqrt{3}} & \frac{1}{3} & \frac{r_1+r_4+r_6-\mathrm{i}(r_2-r_5+r_7)}{3 \sqrt{3}} \\
            \frac{r_1+r_4+r_6-\mathrm{i}(r_2-r_5+r_7)}{3 \sqrt{3}} & \frac{r_1+r_4+r_6+\mathrm{i} (r_2-r_5+r_7)}{3 \sqrt{3}} & \frac{1}{3} \\
        \end{array}
    }
\end{eqnarray*}
where $\bsr'=\frac{1}{3}
(r_1+r_4+r_6,r_2-r_5+r_7,0,r_1+r_4+r_6,-\left(r_2-r_5+r_7\right),r_1+r_4+r_6,r_2-r_5+r_7,0).$

For the qutrit state, we calculate the $\ell_p$-norm coherence
$C_{\ell_p}(\rho)$ and the lower bound the $\ell_p$-norm coherence
$C_{\ell_p}(\Phi(\rho))$ and $C_{\ell_p}(\Delta(\rho))$,
\begin{eqnarray*}
C_{\ell_1}(\rho)=\frac{2}{\sqrt{3}}\Pa{\sqrt{r_1^2+r_2^2}+\sqrt{r_4^2+r_5^2}+\sqrt{r_6^2+r_7^2}},&~&C_{\ell_2}(\rho)=\frac{2}{3} \left(r_1^2+r_2^2+r_4^2+r_5^2+r_6^2+r_7^2\right)\\
C_{\ell_1}(\Phi(\rho))=\frac{2}{\sqrt{3}}\sqrt{(r_1+r_4+r_6)^2+(r_2-r_5+r_7)^2},&~&C_{\ell_2}(\Phi(\rho))=\frac{2}{9} \left((r_1+r_4+r_6)^2+(r_2-r_5+r_7)^2\right)\\
C_{\ell_1}(\Delta(\rho))=\frac{2}{\sqrt{3}}\Pa{r_1+r_4+r_6},&~&C_{\ell_2}(\Delta(\rho))=\frac{2}{9} (r_1+r_4+r_6)^2.
\end{eqnarray*}
Let us consider the qutrit state $\ket{\varphi}=\cos\theta\ket{0}+\frac{\sin\theta}{\sqrt{2}}e^{\mathbf{i}\phi}\ket{1}+\frac{\sin\theta}{\sqrt{2}}\ket{2}$, the vector $\bsr$ in the generalized Gell-Mann representation of $\rho=\out{\varphi}{\varphi}$ can be calculated by $r_i=\frac{\sqrt{3}}{2}\Tr{\rho\bsG_i}$. 
Let $\phi =\frac{\pi}{6}$, we calculated $\ell_p$-norm coherence
$C_{\ell_p}(\rho)$ and its lower bounds and compared them in Figure
\ref{fig}.
\end{exam}

\begin{figure}[htbp]
    \centering
    \subfigure[]{\includegraphics[width=8cm]{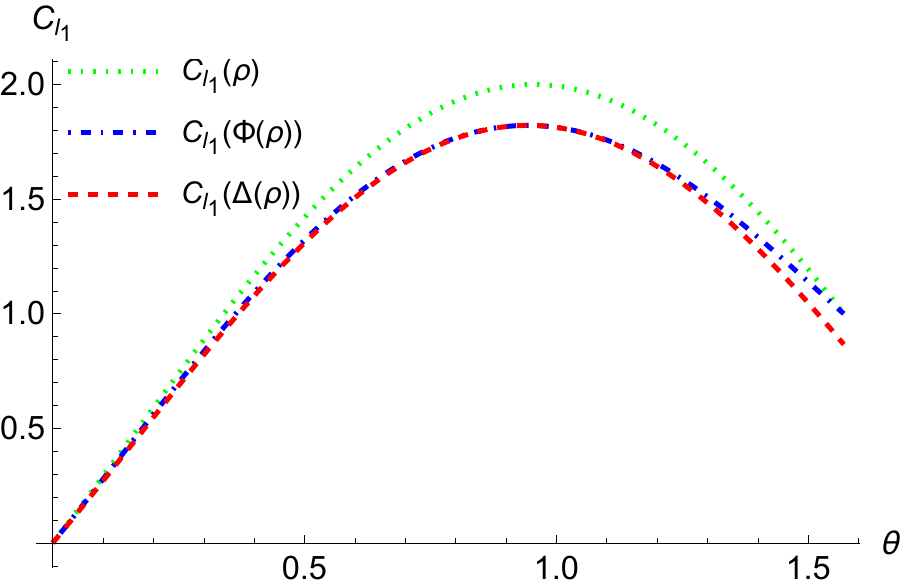}}
    \quad
    \subfigure[]{\includegraphics[width=8cm]{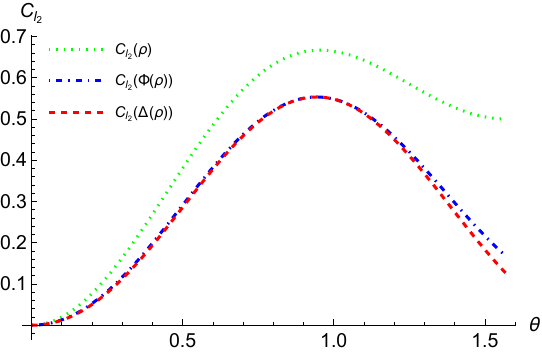}}
    \caption{The estimation of the coherence of $\rho$ by using $\ell_p$-norm coherence measures for $p=1,2$.}
    \label{fig}
\end{figure}

From the above discussion in this section, applying this class of
channels to the theory of Bargmann invariants recovers and
reformulates known foundational results. Crucially, it does so from
a unified operational channel perspective, offering a fresh
viewpoint on the interplay between collective phase information and
quantum correlations. We also see that circulant quantum channels
constitute a well-defined, highly symmetric set of "free" or
"non-resourceful" operations within a resource theory of coherence,
characterized by their covariance under cyclic shifts. This provides
a refined framework for probing quantum resources.

\section{Circulant quantum channels on bipartite systems}
\label{sect:bipartexp}

In this section, we study the circulant channel on bipartite
systems. Let $d_A \leqslant d_B$, and given that the orthonormal
bases of subsystems $\bbC^{d_A}$ and $\bbC^{d_B}$ are
$\set{\ket{i}}_{i=1}^{d_A}$ and $\set{\ket{j}}_{j=1}^{d_B}$
respectively. For the bipartite quantum system $\mathbb{C}^{d_A}
\otimes \mathbb{C}^{d_B}$, the density matrix of $\rho$ is given by
$$\rho_{AB} = \sum_{i,j,i',j'} \rho_{ij,i'j'} |ij\rangle \langle
i'j'|.$$ When $\rho_{AB}$ represents a separable state, it admits
the decomposition $\rho_{AB} = \sum_{i} p_{i} \rho_A^i \otimes
\rho_B^i$ with $\sum_ip_i=1$. Otherwise, $\rho_{AB}$ is an entangled
state.

For $\bsE_{ij}=\out{i}{j}\in \lin{\bbC^d}$ with orthonormal bases
$\set{\ket{i}}_{i=1}^{d}$, it holds that
\begin{eqnarray}\label{th3.1}
    \Phi(\bsE_{ij})=\frac1d \bsP_{\pi_0}^r, ~r=j\ominus i.
\end{eqnarray}
By Corollary~\ref{t1} and Eq.~\eqref{th3.1}, we derive the explicit
form of the channel $\Phi_A\ot\mathsf{id}_{B}$ and $\Phi_A\ot\Phi_B$ on
bipartite systems $\bbC^{d_A} \otimes \bbC^{d_B}$. Where $\sf{id}$
denote the identical channel on operator spaces.

As a specific application, we provide a complete characterization of
the entanglement-breaking property for the uniform-weight circulant
channel. This result, derived from the channel's distinctive
spectral properties, stands as a clear and nontrivial finding.

\begin{thrm}\label{thrm5.1}
For any bipartite operator $\bsX_{AB}$ acting on $\bbC^{d_A} \otimes
\bbC^{d_B}$, we have
\begin{eqnarray*}
\Phi_A\ot \mathsf{id}_B(\bsX_{AB})=\frac{1}{d_A}\sum_{k=0}^{d_A-1}
\bsP^{k}_{\pi_0}\ot\Ptr{A}{\bsX_{AB}\Pa{\bsP^{-k}_{\pi_0}\ot\I_B}}
\end{eqnarray*}
\end{thrm}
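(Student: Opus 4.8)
The plan is to reduce \textbf{Theorem~\ref{thrm5.1}} to the single-system formula $\Phi(\bsE_{ij})=\frac1{d_A}\bsP^{\,j\ominus i}_{\pi_0}$ of Eq.~\eqref{th3.1} together with one harmless reindexing of the cyclic shift. Fix the operator basis $\set{\bsE_{ij}=\out{i}{j}}$ of the $A$-factor and write an arbitrary $\bsX_{AB}\in\lin{\bbC^{d_A}\ot\bbC^{d_B}}$ in block form $\bsX_{AB}=\sum_{i,j=1}^{d_A}\bsE_{ij}\ot\bsX_{ij}$, where $\bsX_{ij}:=(\bra{i}_A\ot\I_B)\,\bsX_{AB}\,(\ket{j}_A\ot\I_B)\in\lin{\bbC^{d_B}}$. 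Since $\Phi_A\ot\mathsf{id}_B$ is linear, applying it termwise and invoking Eq.~\eqref{th3.1} yields
\[
\Phi_A\ot\mathsf{id}_B(\bsX_{AB})=\frac1{d_A}\sum_{i,j=1}^{d_A}\bsP^{\,j\ominus i}_{\pi_0}\ot\bsX_{ij}.
\]

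Next I would collect the terms according to the value of the shift $r:=j\ominus i\in\set{0,\ldots,d_A-1}$; for each fixed $r$ the contributing pairs are exactly $(i,j)=(i,\,i\oplus r)$ with $i\in\set{1,\ldots,d_A}$, which gives
\[
\Phi_A\ot\mathsf{id}_B(\bsX_{AB})=\frac1{d_A}\sum_{r=0}^{d_A-1}\bsP^{r}_{\pi_0}\ot\Bigl(\sum_{i=1}^{d_A}\bsX_{i,\,i\oplus r}\Bigr).
\]
It then remains to identify the inner sum with $\ptr{A}{\bsX_{AB}(\bsP^{-r}_{\pi_0}\ot\I_B)}$. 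Using $\bsP^{-r}_{\pi_0}=\sum_{m}\ketbra{m\oplus r}{m}$ and the block decomposition, a short computation gives $\bsX_{AB}(\bsP^{-r}_{\pi_0}\ot\I_B)=\sum_{i,j}\ketbra{i}{j\ominus r}\ot\bsX_{ij}$; taking the partial trace over $A$ (i.e. applying $\braket{j\ominus r|i}=\delta_{i,\,j\ominus r}$) collapses this to $\sum_{j}\bsX_{j\ominus r,\,j}=\sum_{i}\bsX_{i,\,i\oplus r}$, the final equality being the substitution $j=i\oplus r$. Plugging this back finishes the proof.

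The only thing to be careful about is bookkeeping with the cyclic maps $\oplus,\ominus$ and the identifications modulo $d_A$ (for instance $\bsP^{d_A}_{\pi_0}=\I$, and the fact that $j\ominus i$ runs over a complete residue system so that every power $\bsP^{r}_{\pi_0}$, $r=0,\ldots,d_A-1$, occurs with multiplicity exactly $d_A$); there is no structural obstacle. A slightly more conceptual variant avoids the explicit blocks: by Eq.~\eqref{th3.1} the output is automatically of the form $\sum_{r=0}^{d_A-1}\bsP^{r}_{\pi_0}\ot\bsY_r$ with $\bsY_r\in\lin{\bbC^{d_B}}$, and since $\frac1{d_A}\Tr{\bsP^{-s}_{\pi_0}\bsP^{r}_{\pi_0}}=\delta_{rs}$ (Corollary~\ref{t1}) one has $\bsY_r=\frac1{d_A}\ptr{A}{(\bsP^{-r}_{\pi_0}\ot\I_B)\,\Phi_A\ot\mathsf{id}_B(\bsX_{AB})}$; then the elementary identity $\ptr{A}{(M\ot\I_B)Y}=\ptr{A}{Y(M\ot\I_B)}$ (valid because $\I_B$ is central in the $B$-factor) lets one slide each $\bsP^{-r}_{\pi_0}$ past the sandwich $\bsP^{k}_{\pi_0}(\cdot)\bsP^{-k}_{\pi_0}$ defining $\Phi_A$ and cancel the $k$-dependence, leaving $\bsY_r=\frac1{d_A}\ptr{A}{\bsX_{AB}(\bsP^{-r}_{\pi_0}\ot\I_B)}$, as claimed.
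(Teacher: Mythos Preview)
Your proof is correct and follows essentially the same route as the paper: both apply Eq.~\eqref{th3.1} to the $A$-factor of a basis decomposition of $\bsX_{AB}$, regroup terms by the cyclic shift $r=j\ominus i$, and then identify the resulting $B$-block $\sum_i\bsX_{i,\,i\oplus r}$ with $\ptr{A}{\bsX_{AB}(\bsP^{-r}_{\pi_0}\ot\I_B)}$. The only cosmetic difference is that the paper writes out the full scalar expansion $\bsX_{AB}=\sum x_{ij,i'j'}\out{ij}{i'j'}$, whereas your block form $\bsX_{AB}=\sum_{i,j}\bsE_{ij}\ot\bsX_{ij}$ makes the bookkeeping a bit lighter; your alternative ``orthogonality'' argument at the end is a pleasant extra and also valid.
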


\begin{proof}
We decompose $\bsX$ with respect to two bases
$\set{\ket{i}}_{i=1}^{d_A}$ and $\set{\ket{j}}_{j=1}^{d_B}$ as
$$
\bsX_{AB} = \sum_{i,i'=1}^{d_A} \sum_{j,j'=1}^{d_B}
x_{ij,i'j'}\out{ij}{i'j'}.
$$
Then
\begin{eqnarray*}
&&\Phi_A\ot \mathsf{id}_B(\bsX_{AB})=\sum_{i,i'=1}^{d_A} \sum_{j,j'=1}^{d_B} x_{ij,i'j'}\Phi(\out{i}{i'})\ot\out{j}{j'}\\
&&=\sum_{i=1}^{d_A} \sum_{j,j'=1}^{d_B} x_{ij,ij'}\Phi(\out{i}{i})\ot\out{j}{j'}+\sum_{i=1}^{d_A} \sum_{j,j'=1}^{d_B} x_{ij,(i\oplus1)j'}\Phi(\out{i}{i\oplus1})\ot\out{j}{j'}+\cdots\\
&&~~~+\sum_{i=1}^{d_A} \sum_{j,j'=1}^{d_B} x_{ij,(i\oplus(d_A-1))j'}\Phi(\out{i}{i\oplus(d_A-1)})\ot\out{j}{j'}\\
&&=\frac{1}{d_A}\I_A\ot\Pa{\sum_{i=1}^{d_A} \sum_{j,j'=1}^{d_B} x_{ij,ij'}\out{j}{j'}}+\frac{1}{d_A}\bsP_{\pi_0}\ot\Pa{\sum_{i=1}^{d_A} \sum_{j,j'=1}^{d_B} x_{ij,(i\oplus1)j'}\out{j}{j'}}+\cdots\\
&&~~~+\frac{1}{d_A}\bsP^{d_A-1}_{\pi_0}\ot\Pa{\sum_{i=1}^{d_A} \sum_{j,j'=1}^{d_B} x_{ij,(i\oplus(d_A-1))j'}\ketbra{j}{j'}}\\
&&=\frac{1}{d_A}\Big(\I_A\ot\Ptr{A}{\bsX_{AB}}+\bsP_{\pi_0}\ot\Ptr{A}{\bsX_{AB} \Pa{\bsP_{\pi^{-1}_0}\ot\I_B}}+\cdots\\
&&~~~+\bsP^{d_A-1}_{\pi_0}\ot \Ptr{A}{\bsX_{AB} \Pa{\bsP^{d_A-1}_{\pi^{-1}_0}\ot\I_B}}\Big)\\
&&=\frac{1}{d_A}\sum_{k=0}^{d_A-1}
\bsP^{k}_{\pi_0}\ot\Ptr{A}{\bsX_{AB}\Pa{\bsP^{-k}_{\pi_0}\ot\I_B}},
\end{eqnarray*}
due to
\begin{eqnarray*}
\Ptr{A}{\bsX_{AB}\Pa{\bsP^{-k}_{\pi_0}\ot\I_B}}&=&\sum_{i,i',p=1}^{d_A} \sum_{j,j',q=1}^{d_B} x_{ij,i'j'}\Ptr{A}{\ketbra{ij}{i'j'}\cdot \ketbra{(p\oplus k)q}{p q}}\\
        &=&\sum_{i,p=1}^{d_A} \sum_{j,q=1}^{d_B} x_{ij,(p\oplus k)q}\Ptr{A}{\out{ij}{pq}}\\
        &=&\sum_{p=1}^{d_A} \sum_{j,q=1}^{d_B} x_{pj,(p\oplus k)q}\out{j}{q}.
\end{eqnarray*}
\end{proof}

In this case, the output $\Phi_A \otimes \mathsf{id}_B(\bsX_{AB})$
is a block circulant matrix. If the quantum circulant channel acts
locally on two subsystems of a bipartite system simultaneously, we
can also obtain the explicit expression of the channel. At this
point, the output $\Phi_A \otimes \Phi_B(\bsX_{AB})$ is also a block
circulant matrix where each block is itself a circulant matrix.

\begin{thrm}
    For any bipartite operator $\bsX_{AB}$ acting on $\mathbb{C}^{d_A} \otimes \mathbb{C}^{d_B}$, we have
    \begin{eqnarray*}
        \Phi_A\ot \Phi_B(\bsX_{AB})=\frac{1}{d_Ad_B}\sum_{r_A=0}^{d_A-1}\sum_{r_B=0}^{d_B-1}\Tr{\bsX_{AB}\Pa{\bsP^{-r_A}_{\pi_0}\ot\bsP^{-r_B}_{\pi_0}}}\bsP^{r_A}_{\pi_0}\ot \bsP^{r_B}_{\pi_0}.
    \end{eqnarray*}
\end{thrm}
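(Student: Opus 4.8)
The plan is to derive the formula by composing the two single-system facts already in hand, using that $\Phi_A\ot\Phi_B=(\mathsf{id}_A\ot\Phi_B)\circ(\Phi_A\ot\mathsf{id}_B)$. First I apply Theorem~\ref{thrm5.1} to $\bsX_{AB}$, obtaining
\[
\Phi_A\ot\mathsf{id}_B(\bsX_{AB})=\frac{1}{d_A}\sum_{k=0}^{d_A-1}\bsP^{k}_{\pi_0}\ot\Ptr{A}{\bsX_{AB}\Pa{\bsP^{-k}_{\pi_0}\ot\I_B}}.
\]
Then, applying $\mathsf{id}_A\ot\Phi_B$ term by term and invoking Corollary~\ref{t1} on the $B$-system operator $\bsY_k:=\Ptr{A}{\bsX_{AB}(\bsP^{-k}_{\pi_0}\ot\I_B)}\in\lin{\bbC^{d_B}}$, I get $\Phi_B(\bsY_k)=\frac{1}{d_B}\sum_{r_B=0}^{d_B-1}\Tr{\bsP^{-r_B}_{\pi_0}\bsY_k}\bsP^{r_B}_{\pi_0}$, where all the $\bsP_{\pi_0}$ here are the $d_B\times d_B$ cyclic shift.

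The second step is to collapse the partial trace using the adjoint relation $\Tr{\bsB\,\Ptr{A}{\bsC}}=\Tr{(\I_A\ot\bsB)\,\bsC}$ together with cyclicity of the trace, which yields
\[
\Tr{\bsP^{-r_B}_{\pi_0}\,\Ptr{A}{\bsX_{AB}(\bsP^{-k}_{\pi_0}\ot\I_B)}}=\Tr{\bsX_{AB}\Pa{\bsP^{-k}_{\pi_0}\ot\bsP^{-r_B}_{\pi_0}}}.
\]
Substituting this back into the double sum and relabelling $k\mapsto r_A$ produces exactly the claimed expression, the prefactor $\tfrac{1}{d_Ad_B}$ arising from the two normalizations $\tfrac{1}{d_A}$ and $\tfrac{1}{d_B}$. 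Reading off the double sum over $\bsP^{r_A}_{\pi_0}\ot\bsP^{r_B}_{\pi_0}$ then also makes visible the announced structure: a $d_A\times d_A$ block circulant matrix whose blocks are themselves $d_B\times d_B$ circulant matrices.

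As an alternative route one can work directly with entries: expand $\bsX_{AB}=\sum x_{ij,i'j'}\out{ij}{i'j'}$ in the product basis, apply Eq.~\eqref{th3.1} to each factor so that $\Phi_A\ot\Phi_B(\out{ij}{i'j'})=\frac{1}{d_Ad_B}\bsP^{i'\ominus i}_{\pi_0}\ot\bsP^{j'\ominus j}_{\pi_0}$, and regroup the terms according to the shift values $r_A=i'\ominus i$ and $r_B=j'\ominus j$; the coefficient of $\bsP^{r_A}_{\pi_0}\ot\bsP^{r_B}_{\pi_0}$ is $\frac{1}{d_Ad_B}\sum_{p,q}x_{pq,(p\oplus r_A)(q\oplus r_B)}$, which equals $\frac{1}{d_Ad_B}\Tr{\bsX_{AB}(\bsP^{-r_A}_{\pi_0}\ot\bsP^{-r_B}_{\pi_0})}$ using $\bsP^{-r}_{\pi_0}=\sum_i\out{i\oplus r}{i}$. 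I expect the only genuine bookkeeping hazard to be the partial-trace adjoint identity in the first route --- keeping straight which tensor factor the identity sits on --- or, in the second route, keeping the $\oplus/\ominus$ index shifts consistent; neither is conceptually difficult once the indices are set up carefully, so the proof should be short.
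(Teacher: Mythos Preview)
Your proposal is correct. In fact, your \emph{alternative} route --- expanding $\bsX_{AB}$ in the product basis, applying Eq.~\eqref{th3.1} to each tensor factor, and regrouping by the shifts $r_A=i'\ominus i$, $r_B=j'\ominus j$ --- is exactly the argument the paper gives, down to identifying the coefficient $\frac{1}{d_Ad_B}\sum_{p,q}x_{pq,(p\oplus r_A)(q\oplus r_B)}$ with the trace $\frac{1}{d_Ad_B}\Tr{\bsX_{AB}(\bsP^{-r_A}_{\pi_0}\ot\bsP^{-r_B}_{\pi_0})}$.

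Your \emph{primary} route is a mild but genuine variation: rather than unpacking indices again, you factor $\Phi_A\ot\Phi_B=(\mathsf{id}_A\ot\Phi_B)\circ(\Phi_A\ot\mathsf{id}_B)$, feed in Theorem~\ref{thrm5.1} for the first factor and Corollary~\ref{t1} for the second, and then collapse the partial trace via $\Tr{\bsB\,\Ptr{A}{\bsC}}=\Tr{(\I_A\ot\bsB)\bsC}$ together with cyclicity. This is arguably cleaner, since it reuses the two single-system results already proven and avoids repeating the index bookkeeping; the paper's direct computation, on the other hand, is self-contained and does not rely on Theorem~\ref{thrm5.1}. Either way the proof is a few lines.
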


\begin{proof}
    From Eq. (\ref{th3.1}), we have
    \begin{eqnarray*}
        \Phi_A\ot \Phi_B(\bsX_{AB})&=&\sum_{i,i'=1}^{d_A} \sum_{j,j'=1}^{d_B} x_{ij,i'j'}\Phi(|i\rangle\langle i'|)\ot\Phi(\ketbra{j}{j'})\\
        &=&\frac{1}{d_Ad_B}\sum_{r_A=0}^{d_A-1} \sum_{r_B=0}^{d_B-1}\sum_{i=1}^{d_A} \sum_{j=1}^{d_B} x_{ij,(i\oplus r_A)(j\oplus r_B)}\bsP_{\pi_0}^{r_A}\ot \bsP_{\pi_0}^{r_B}\\
        &=&\frac{1}{d_Ad_B}\sum_{r_A=0}^{d_A-1}\sum_{r_B=0}^{d_B-1}\Tr{\bsX_{AB}\Pa{\bsP^{-r_A}_{\pi_0}\ot\bsP^{-r_B}_{\pi_0}}}\bsP^{r_A}_{\pi_0}\ot \bsP^{r_B}_{\pi_0},
    \end{eqnarray*}
    where $\Tr{\bsX_{AB}\Pa{\bsP^{-r_A}_{\pi_0}\ot\bsP^{-r_B}_{\pi_0}}}=\sum_{i=1}^{d_A} \sum_{j=1}^{d_B} x_{ij,(i\oplus r_A)(j\oplus r_B)}$.
\end{proof}

For any $2\ot d_B$ states $\rho_{AB}$, due to
$\bsP_{\pi_0}^\t=\bsP_{\pi_0}$ for $\pi_0\in S_2$, we have
$(\Phi_A\ot\mathsf{id}_B (\rho_{AB}))^{\Gamma_A}=\Phi_A\ot\mathsf{id}_B
(\rho_{AB})\geqslant0$ and
$(\Phi_A\ot\Phi_B(\rho_{AB}))^{\Gamma_A}=\Phi_A\ot\mathsf{id}_B
(\rho_{AB})\geqslant0$, where $\Gamma_A$ denote partial-transpose with respect to the subsystem $A$. Thus, both $\Phi_A\ot\mathsf{id}_B
(\rho_{AB})$ and $\Phi_A\ot\Phi_B(\rho_{AB})$ are positive under
partial transpositions (PPT) \cite{Peres1996}. Moreover, according
to the PPT criteria, if $d_B=2,3$, then the output states is
separable. This indicates that in the qubit-qubit system and the
qubit-qutrit system, the local circulant channel completely
eliminates the entanglement between the two subsystems.

A channel $\cT \equiv \cT_A$ in which the entanglement is completely
erased like this is called an entanglement-breaking channel, which
means that $(\cT_A \otimes \text{id}_B)(\rho_{AB})$ is separable for
every bipartite state $\rho_{AB} \in \density{\mathbb{C}^{d_A}\ot
\mathbb{C}^{d_B}}$. In qudit systems, any quantum channel is
entanglement breaking if and only if its Choi representation is
separable \cite{Horodecki2003}. Next, we will prove that the channel
$\Phi$ is entanglement-breaking channel.

\begin{thrm}\label{th:entbreaking}
In the family $\Set{\Phi_{\boldsymbol{\lambda}}:\boldsymbol{\lambda}\in\bbR^d \text{ is a
probability vector}}$ of circulant quantum channels, there is only
one channel $\Phi$ (corresponding to $\Phi_{\boldsymbol{\lambda}}$ for
$\boldsymbol{\lambda}=(1/d,\ldots,1/d)$) is entanglement-breaking. In other
words, $\Phi_{\boldsymbol{\lambda}}$ is entanglement-breaking if and only if
$\boldsymbol{\lambda}=(1/d,\ldots,1/d)$.
\end{thrm}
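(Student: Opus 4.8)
The plan is to prove the two implications separately, relying on the criterion recalled above (\cite{Horodecki2003}): a qudit channel is entanglement-breaking precisely when its Choi matrix is separable, so in particular an NPT Choi matrix certifies that the channel is \emph{not} entanglement-breaking.

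First I would show that $\Phi$ (the case $\boldsymbol{\lambda}=(1/d,\ldots,1/d)$) is entanglement-breaking. Diagonalizing $\bsP_{\pi_0}=\bsF\Omega\bsF^\dagger$ as in Eq.~\eqref{eq:Omega} and writing $\bsY:=\bsF^\dagger\bsX\bsF$, the cyclic average $\tfrac1d\sum_{k}\Omega^k\bsY\Omega^{-k}$ retains only the diagonal of $\bsY$, because its $(a,b)$-entry carries the factor $\tfrac1d\sum_k\omega^{k(a-b)}=\delta_{ab}$. Hence
\[
\Phi(\bsX)=\sum_{j=0}^{d-1}\Innerm{f_j}{\bsX}{f_j}\,\proj{f_j}=\sum_{j=0}^{d-1}\proj{f_j}\,\bsX\,\proj{f_j},\qquad \ket{f_j}:=\bsF\ket{j},
\]
so $\Phi$ has the rank-one Kraus operators $\proj{f_j}$, and its Choi matrix $J(\Phi)=\tfrac1d\sum_j\proj{f_j}\ot\proj{g_j}$ (with $\ket{g_j}$ the entrywise complex conjugate of $\ket{f_j}$) is manifestly separable. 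By \cite{Horodecki2003}, $\Phi$ is entanglement-breaking.

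Next I would show that any non-uniform $\boldsymbol{\lambda}$ gives a channel that is not entanglement-breaking, by proving that the Choi matrix $J(\Phi_{\boldsymbol{\lambda}})=\sum_{k=0}^{d-1}\lambda_k\proj{\Omega_k}$, where $\ket{\Omega_k}:=(\bsP^k_{\pi_0}\ot\I_d)\ket{\Omega}$ and $\ket{\Omega}:=\tfrac1{\sqrt d}\sum_i\ket{ii}$, is not PPT. Using $\proj{\Omega}^{\Gamma_A}=\tfrac1d W$ for the swap $W$, the relation $\bsP_{\pi_0}^\t=\bsP_{\pi_0}^{-1}$, and $W(\I_d\ot\bsA)=(\bsA\ot\I_d)W$, one finds $J(\Phi_{\boldsymbol{\lambda}})^{\Gamma_A}=\tfrac1d\sum_k\lambda_k(\bsP^k_{\pi_0}\ot\bsP^{-k}_{\pi_0})W$. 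Conjugating by $\bsF\ot\bsF$ (which fixes $W$) turns this, up to the positive factor $1/d$, into the operator $M$ with $M\ket{ab}=\widehat\lambda(b-a)\ket{ba}$, where $\widehat\lambda(m):=\sum_k\lambda_k\omega^{km}$ is the discrete Fourier transform of $\boldsymbol{\lambda}$, so $\widehat\lambda(0)=1$ and $\widehat\lambda(-m)=\overline{\widehat\lambda(m)}$. Now $M$ acts as the identity on $\spn\set{\ket{aa}}$ and, on each plane $\spn\set{\ket{ab},\ket{ba}}$ with $a\neq b$, as a $2\times2$ matrix with zero diagonal and off-diagonal entries $\widehat\lambda(a-b),\widehat\lambda(b-a)$, whose eigenvalues are $\pm\abs{\widehat\lambda(a-b)}$. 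Thus $M\geqslant0$ if and only if $\widehat\lambda(m)=0$ for all $m\in\set{1,\ldots,d-1}$, which together with $\widehat\lambda(0)=1$ forces $\lambda_k=1/d$ by the inverse Fourier transform. Consequently $J(\Phi_{\boldsymbol{\lambda}})$ is NPT whenever $\boldsymbol{\lambda}\neq(1/d,\ldots,1/d)$, hence entangled, so $\Phi_{\boldsymbol{\lambda}}$ is not entanglement-breaking; combined with the first step this is the stated equivalence.

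The point that needs care is that for $d\geqslant3$ the PPT property is strictly weaker than separability, so one cannot in general certify entanglement of a Choi matrix just by showing it is not separable; the substance of the second step is that the obstruction is already detected by partial transposition --- $J(\Phi_{\boldsymbol{\lambda}})$ is genuinely NPT for every non-uniform weight --- which is what makes the argument elementary. The remaining ingredients (the \emph{ricochet} identity $(\bsA\ot\I_d)\ket{\Omega}=(\I_d\ot\bsA^\t)\ket{\Omega}$, commuting the swap past local unitaries, and the $2\times2$ block reduction) are routine.
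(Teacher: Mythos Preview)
Your proof is correct and follows essentially the same strategy as the paper: both diagonalize $\bsP_{\pi_0}$ via the DFT, compute the partial transpose of the Choi matrix, and show that its spectrum is $\{1\}\cup\{\pm|\widehat\lambda(m)|:m\neq0\}$ (the paper's $\alpha_\mu(\boldsymbol{\lambda})$ is your $\widehat\lambda(\mu)/d$), so positivity forces $\widehat\lambda(m)=0$ for all $m\neq0$ and hence uniform $\boldsymbol{\lambda}$. Your use of the swap identity $\proj{\Omega}^{\Gamma_A}=\tfrac1d W$ to obtain the partial transpose and the explicit measure-and-prepare (rank-one Kraus) form of $\Phi$ are minor presentational variations on the paper's computation in the $(\bsF\ot\bar\bsF)$-rotated basis, but the substance is identical.
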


\begin{proof}
(1) We first show that the Choi-Jami{\l}kowski representation of the
channel $\Phi_\lambda$ is given by
\begin{eqnarray*}
J(\Phi_{\boldsymbol{\lambda}})
=(\bsF\ot\bar\bsF)\Pa{\sum^{d-1}_{i,j=0}\alpha_{i-j}(\boldsymbol{\lambda})\out{ii}{jj}}(\bsF\ot\bar\bsF)^\dagger,
\end{eqnarray*}
where
$$
\alpha_\mu(\boldsymbol{\lambda}):=\frac1d\sum^{d-1}_{k=0}\lambda_k\omega^{k\mu}.
$$
The set of all eigenvalues of the partial-transposed Choi-Jami{\l}kowski
representation $J(\Phi_{\boldsymbol{\lambda}})^\Gamma$, with respect to the second
subsystem, is given by
\begin{eqnarray}
\Set{[\alpha_0(\boldsymbol{\lambda})]_{(d)}, [\pm
\abs{\alpha_\mu(\boldsymbol{\lambda})}]_{(d)}\text{ for each }\mu\neq0}.
\end{eqnarray}
Indeed,
\begin{eqnarray*}
J(\Phi_\lambda) &=&
\sum^{d-1}_{k=0}\lambda_k\vec(\bsP^k_{\pi_0})\vec(\bsP^k_{\pi_0})^\dagger\\
&=&
(\bsF\ot\bar\bsF)\Pa{\sum^{d-1}_{k=0}\lambda_k\vec(\Omega^k)\vec(\Omega^k)^\dagger}
(\bsF\ot\bar\bsF)^\dagger,
\end{eqnarray*}
where the second equality holds because of
Eq.~\eqref{eq:Omega}. Using
$\Omega^k=\sum^{d-1}_{i=0}\omega^{ki}\proj{i}$ and
$\vec(\Omega^k)=\sum^{d-1}_{i=0}\omega^{ki}\ket{ii}$, we have
\begin{eqnarray*}
\sum^{d-1}_{k=0}\lambda_k\vec(\Omega^k)\vec(\Omega^k)^\dagger =
\sum^{d-1}_{i,j=0}\Pa{\frac1d\sum^{d-1}_{k=0}\lambda_k\omega^{k(i-j)}}\out{ii}{jj}=\sum^{d-1}_{i,j=0}\alpha_{i-j}(\boldsymbol{\lambda})\out{ii}{jj},
\end{eqnarray*}
whose partial-transpose with respect to the second subsystem (denote
by $\Gamma$) is given by
\begin{eqnarray*}
\Pa{\sum^{d-1}_{k=0}\lambda_k\vec(\Lambda^k)\vec(\Lambda^k)^\dagger}^\Gamma
=\sum^{d-1}_{i,j=0}\alpha_{i-j}(\boldsymbol{\lambda})\out{ij}{ji}=:\bsM.
\end{eqnarray*}
Then
\begin{eqnarray*}
\bsM\bsM^\dagger =\sum^{d-1}_{i,j=0}\abs{
\alpha_{i-j}(\boldsymbol{\lambda})}^2\out{ij}{ij}\Longrightarrow\abs{\bsM}=\sum^{d-1}_{i,j=0}\abs{
\alpha_{i-j}(\boldsymbol{\lambda})}\out{ij}{ij}
\end{eqnarray*}
Therefore the set of eigenvalues of $\abs{\bsM}$ is given by
\begin{eqnarray*}
\set{\abs{\alpha_{i-j}(\boldsymbol{\lambda})}:i,j=0,\ldots,d-1}.
\end{eqnarray*}
Thus the set of eigenvalues of $\bsM$ is given by
\begin{eqnarray*}
\Set{(\alpha_0(\boldsymbol{\lambda})=1/d)_{(d)},\Pa{\pm\abs{\alpha_\mu(\boldsymbol{\lambda})}}_{(d)}\text{ for
each }\mu\neq0}.
\end{eqnarray*}
(2) In the family
$\set{\Phi_{\boldsymbol{\lambda}}:\boldsymbol{\lambda}\in\bbR^d\text{
is a probability vector}}$, we show that $\Phi$ is the unique
entanglement-breaking channel. Indeed, from the above reasoning, we
see that $\bsM$ is positive semi-definite if and only if
$\abs{\alpha_\mu}=0$ for all $\mu\neq0$. This implies that
$\alpha_\mu=0$ for all $\mu\neq0$. The only solution is
$\lambda_k=\frac1d$ for all $k=0,1,\ldots,d-1$. In summary,
$J(\Phi_{\boldsymbol{\lambda}})^\Gamma\geqslant0$ if and only if
$\lambda_k=\frac1d$ for all $k=0,1,\ldots,d-1$. In other words,
$\Phi_{\boldsymbol{\lambda}}$ is not entanglement-breaking if and
only if $\boldsymbol{\lambda}\neq(1/d,\ldots,1/d)$. Through the
relationship between Klaus representation and Choi representation,
we can calculate the expression of $J(\Phi)$ \cite{Watrous}. Indeed,
via $\bsP_{\pi_0}=\bsF\Omega\bsF^\dagger$,
\begin{eqnarray*}
J(\Phi)&=&
\frac1d\sum^{d-1}_{k=0}\vec(\bsP^k_{\pi_0})\vec(\bsP^k_{\pi_0})^\dagger
=
\frac1d\sum^{d-1}_{k=0}\vec(\bsF\Omega^k\bsF^\dagger)\vec(\bsF\Omega^k\bsF^\dagger)^\dagger\\
&=&
\Pa{\bsF\ot\bar\bsF}\Pa{\frac1d\sum^{d-1}_{k=0}\vec(\Omega^k)\vec(\Omega^k)^\dagger}\Pa{\bsF\ot\bar\bsF}^\dagger,
\end{eqnarray*}
where the last equality is due to the identity
$\vec{(\bsA\bsX\bsB)}=(\bsA\ot\bsB^{\T})\vec{(\bsX)}$. In fact,
after simple calculations, we can obtain the following result
\begin{eqnarray}
\frac1d\sum^{d-1}_{k=0}\vec(\Omega^k)\vec(\Omega^k)^\dagger =
\sum^{d-1}_{i=0}\proj{ii}= \sum^{d-1}_{i=0}\proj{i}\ot\proj{i},
\end{eqnarray}
which is separable operator. Since $J(\Phi)$ is locally unitary to
separable operator, it follows that $J(\Phi)$ is also separable.
This leads to the fact that $\Phi$ is entanglement-breaking.
\end{proof}

\begin{remark} We can also derive the expression of $J(\Phi)$
by the definition \cite{Watrous}. The Choi-representation of the
circulant channel $\Phi$ is given by
$J(\Phi):=\Phi\ot\mathsf{id}(\mathrm{\vec}(\I_d)\mathrm{\vec}(\I_d)^\dagger)$,
where $\mathrm{\vec}(\I_d)=\sum_{i=1}^{d}\ket{ii}$. By the Theorem
\ref{thrm5.1}, it follows that
\begin{eqnarray*}
\Phi\ot\mathsf{id}(\mathrm{\vec}(\I_d)\mathrm{\vec}(\I_d)^\dagger)&=&\frac{1}{d}\sum_{k=0}^{d-1} \bsP^{k}_{\pi_0}\ot\Ptr{A}{\mathrm{\vec}(\I_d)\mathrm{\vec}(\I_d)^\dagger\Pa{\bsP^{-k}_{\pi_0}\ot\I_B}}\\
&=& \frac{1}{d}\sum_{k=0}^{d-1}\bsP^{k}_{\pi_0}\ot \sum_{i,j=1}^{d}\Ptr{A}{\ketbra{ii}{jj} \Pa{\sum_{p=1}^{d}\ketbra{p}{\pi_0^k(p)}\ot\I_B}}\\
&=&\frac{1}{d}\sum_{k=0}^{d-1}\bsP^{k}_{\pi_0}\ot\sum_{i,j,p=1}^{d}\Ptr{A}{\Pa{\ketbra{i}{j} \ketbra{p}{\pi_0^k(p)}}\ot\ketbra{i}{j}}\\
&=&\frac{1}{d}\sum_{k=0}^{d-1}\bsP^{k}_{\pi_0}\ot \bsP^{k}_{\pi_0}.
\end{eqnarray*}
Finally, the following identity holds:
\begin{eqnarray*}
\Pa{\bsF\ot\bar\bsF}\Pa{\frac1d\sum^{d-1}_{k=0}\vec(\Omega^k)\vec(\Omega^k)^\dagger}\Pa{\bsF\ot\bar\bsF}^\dagger
=
(\bsF\ot\bsF)\Pa{\frac1d\sum^{d-1}_{k=0}\Omega^k\ot\Omega^k}(\bsF\ot\bsF)^\dagger
\end{eqnarray*}
which is equivalent to
\begin{eqnarray*}
\Pa{\bsF\ot\bar\bsF}\Pa{\sum^{d-1}_{i=0}\proj{ii}}\Pa{\bsF\ot\bar\bsF}^\dagger
= (\bsF\ot\bsF)\Pa{
\sum^{d-1}_{\stackrel{i,j=0}{i+j\equiv0(\!\!\!\mod
d)}}\proj{ij}}(\bsF\ot\bsF)^\dagger.
\end{eqnarray*}
Therefore $J(\Phi)$ is separable, and thus $\Phi$ is
entanglement-breaking.
\end{remark}

\begin{cor}
For any bipartite entangled state
$\rho_{AB}\in\density{\bbC^{d_A}\ot\bbC^{d_B}}$, it holds that
$$
(\Phi_{\boldsymbol{\lambda}_A}\ot
\Phi_{\boldsymbol{\lambda}_B})(\rho_{AB})
$$
is still entangled whenever both probability vectors
$\boldsymbol{\lambda}_A\in\bbR^{d_A}$ and
$\boldsymbol{\lambda}_B\in\bbR^{d_B}$ are not uniform.
\end{cor}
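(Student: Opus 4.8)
The plan is to argue by contraposition in a sharpened form: I would try to show that for non-uniform weights the local channel $\Phi_{\boldsymbol{\lambda}}$ is an invertible linear map on $\lin{\bbC^d}$, and then that such an invertible local product map cannot send an entangled state into the separable set. The first step reuses the Fourier diagonalization already exploited in the proof of Theorem~\ref{th:entbreaking}: writing $\bsP_{\pi_0}=\bsF\Omega\bsF^\dagger$ and conjugating by $\bsF$ turns $\Phi_{\boldsymbol{\lambda}}$ into the entrywise (Schur) multiplier $\bsY\mapsto (d\,\alpha_{i-j}(\boldsymbol{\lambda})\,Y_{ij})_{i,j}$, where $\alpha_\mu(\boldsymbol{\lambda})=\tfrac1d\sum_k\lambda_k\omega^{k\mu}$ is the discrete Fourier transform of $\boldsymbol{\lambda}$ and $\alpha_0(\boldsymbol{\lambda})=1/d$. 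Conjugating instead by $\bsF\ot\bsF$, the product $\Phi_{\boldsymbol{\lambda}_A}\ot\Phi_{\boldsymbol{\lambda}_B}$ is again a Schur multiplier, now with factors $d_Ad_B\,\alpha_{i-i'}(\boldsymbol{\lambda}_A)\alpha_{j-j'}(\boldsymbol{\lambda}_B)$; hence it is invertible precisely when $\alpha_\mu(\boldsymbol{\lambda}_A)\neq0$ and $\alpha_\nu(\boldsymbol{\lambda}_B)\neq0$ for all $\mu,\nu$.

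Granting invertibility, the core step I would attempt is a witness-transport argument. If $\rho_{AB}$ is entangled, fix an entanglement witness $\bsW$ with $\Tr{\bsW\rho_{AB}}<0$. Since $\Phi^\dagger_{\boldsymbol{\lambda}}$ is the Schur multiplier by the conjugate factors $d\,\overline{\alpha_{i-j}(\boldsymbol{\lambda})}$, the adjoint identity $\Tr{\bsW\,(\Phi_{\boldsymbol{\lambda}_A}\ot\Phi_{\boldsymbol{\lambda}_B})(\rho_{AB})}=\Tr{(\Phi^\dagger_{\boldsymbol{\lambda}_A}\ot\Phi^\dagger_{\boldsymbol{\lambda}_B})(\bsW)\,\rho_{AB}}$ reduces the problem to showing that the rescaled operator $\bsW'=(\Phi^\dagger_{\boldsymbol{\lambda}_A}\ot\Phi^\dagger_{\boldsymbol{\lambda}_B})(\bsW)$ remains block-positive (nonnegative on all product vectors) while keeping $\Tr{\bsW'\rho_{AB}}<0$. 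I would try to establish block-positivity of $\bsW'$ from that of $\bsW$ by controlling the Fourier-coefficient rescaling factor by factor, and to preserve the negative overlap by a continuity argument anchored at the identity (weights near $(1,0,\ldots,0)$).

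The hard part, and where I expect the approach to be in genuine danger, is exactly this block-positivity step. Invertibility of $\Phi_{\boldsymbol{\lambda}_A}\ot\Phi_{\boldsymbol{\lambda}_B}$ does not by itself force entanglement preservation, because the inverse map is positive on neither tensor factor and is not trace-preserving, so one cannot simply pull the separable cone back through it; one must control the separability boundary directly, which is what the witness-transport makes explicit. Moreover, for $\mu\neq0$ one has $|d\,\alpha_\mu(\boldsymbol{\lambda})|\leq1$ by the triangle inequality, with strict contraction for generic non-uniform weights, so $\Phi_{\boldsymbol{\lambda}_A}\ot\Phi_{\boldsymbol{\lambda}_B}$ acts as a two-sided partial dephasing on the off-diagonal Fourier modes---precisely the mechanism capable of moving a weakly entangled state onto the separable set. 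I therefore anticipate that the witness-transport step will close only under an additional quantitative smallness hypothesis on the contraction factors $\max_{\mu\neq0}|d\,\alpha_\mu(\boldsymbol{\lambda}_A)|$ and $\max_{\nu\neq0}|d\,\alpha_\nu(\boldsymbol{\lambda}_B)|$, rather than for every non-uniform pair; making that hypothesis explicit (or pinning down its necessity) is the crux of any rigorous argument, with the additional subtlety that non-uniformity alone does not even guarantee invertibility, since for $d\geq3$ some $\alpha_\mu(\boldsymbol{\lambda})$ can vanish at non-uniform $\boldsymbol{\lambda}$.
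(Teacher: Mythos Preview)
Your skepticism is well founded, and in fact the corollary as stated is false; the paper's one-line ``proof'' is a non sequitur. Theorem~\ref{th:entbreaking} only tells you that for non-uniform $\boldsymbol{\lambda}$ the channel $\Phi_{\boldsymbol{\lambda}}$ is not entanglement-breaking, i.e.\ that \emph{some} entangled state survives. The corollary asserts that \emph{every} entangled state survives, which is a strictly stronger claim and does not follow. Your diagnosis that $\Phi_{\boldsymbol{\lambda}_A}\otimes\Phi_{\boldsymbol{\lambda}_B}$ is a two-sided partial dephasing in the Fourier picture, and can therefore push weakly entangled states across the separable boundary, is exactly the failure mechanism.

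A concrete counterexample in $d_A=d_B=2$: with $\boldsymbol{\lambda}_A=\boldsymbol{\lambda}_B=(0.6,0.4)$ (non-uniform), the channel on each qubit is the bit-flip $X\mapsto 0.6\,X+0.4\,\sigma_x X\sigma_x$. Take the Werner state $\rho_p=p\,\proj{\Phi^+}+(1-p)\tfrac{\I}{4}$ with $p=\tfrac12>\tfrac13$, which is entangled. Since $\sigma_x\otimes\sigma_x\ket{\Phi^+}=\ket{\Phi^+}$ and $\sigma_x\otimes\I\ket{\Phi^+}=\I\otimes\sigma_x\ket{\Phi^+}=\ket{\Psi^+}$, the output is Bell-diagonal with weights
\[
\Bigl(pa+\tfrac{1-p}{4},\ \tfrac{1-p}{4},\ p(1-a)+\tfrac{1-p}{4},\ \tfrac{1-p}{4}\Bigr),\qquad a=0.6^2+0.4^2=0.52,
\]
and the largest weight is $pa+\tfrac{1-p}{4}=0.26+0.125=0.385<\tfrac12$, so the output is separable. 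Thus no witness-transport argument can succeed without the extra quantitative hypothesis you anticipated. Your side remark that non-uniformity need not even give invertibility is also correct: for $d=4$ and $\boldsymbol{\lambda}=(0.3,0.4,0.2,0.1)$ one has $\alpha_2(\boldsymbol{\lambda})=\tfrac14(\lambda_0-\lambda_1+\lambda_2-\lambda_3)=0$.
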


\begin{proof}
The proof is easily obtained from immediately from
Theorem~\ref{th:entbreaking}.
\end{proof}

\section{Conclusions}\label{sect:concl}

We investigate a subclass of mixed-permutation channels---circulant
quantum channels---and characterize their fundamental properties.
These channels not only enable tighter bounds for $\ell_p$-norm
$(p=1,2)$ coherence measures but also resolve the characterization
problem for Bargmann invariant sets. Furthermore, we analyzed the
action of the circulant quantum channel in the bipartite system and
revealed a notable feature of this channel: as an
entanglement-breaking channel, it can completely eliminate quantum
entanglement at a small cost, requiring only a small number of Kraus
operators.

Given the pivotal role of circulant Gram matrices in characterizing
Bargmann invariants, we propose exploring whether other quantum
circulant channel families exist. Owing to the distinctive
spectral and structural properties of circulant matrices, we
anticipate significant applications of quantum circulant channels
across diverse domains.

\subsection*{Acknowledgement}

This work is supported by Zhejiang Provincial Natural Science
Foundation of China under Grants No. LZ23A010005.\\~\\
\textbf{Data Availability Statement} No data are associated in the
manuscript.\\~\\
\textbf{Author Contributions Statement} B.X. and L.Z. wrote the main
manuscript text and B.X. prepared figures 1-2. All authors reviewed
the manuscript.



\end{document}